\documentclass[10pt,conference]{IEEEtran}
\usepackage{fixltx2e}
\usepackage{cite}
\usepackage{url}
\usepackage{mathrsfs}
\usepackage{color}
\usepackage{float}
\usepackage[caption=false]{subfig}

\ifCLASSINFOpdf
   \usepackage[pdftex]{graphicx}
   \graphicspath{{Figs/}}
   \DeclareGraphicsExtensions{.pdf,.jpeg,.png}
\else
\fi

\usepackage[cmex10]{amsmath}
\usepackage{amsmath}
\usepackage{amssymb}
\usepackage{amsthm}
\usepackage{amsfonts}
\usepackage{bm}
\usepackage{xfrac}
\usepackage{empheq}
\usepackage[normalem]{ulem} 
\usepackage{soul} 
\usepackage{mathtools}
\DeclarePairedDelimiter\ceil{\lceil}{\rceil}

\newtheorem{theorem}{Theorem}

\newtheorem{example}{Example}
\newtheorem{proposition}{Proposition}
\newtheorem{lemma}{Lemma}

\newtheorem{corollary}{Corollary}
\newtheorem{remark}{Remark}
\theoremstyle{definition}


\usepackage[a4paper,bindingoffset=0.2in,%
left=1in,right=1in,top=1in,bottom=1in,%
footskip=.25in]{geometry}
\graphicspath{{figs/}}

\interdisplaylinepenalty=2500
\begin{document}
	\newgeometry{left=0.7in,right=0.7in,top=.5in,bottom=1in}
	\title{Private Variable-Length Coding with Zero Leakage}
\vspace{-5mm}
\author{
	\IEEEauthorblockN{Amirreza Zamani$^\dagger$, Tobias J. Oechtering$^\dagger$, Deniz G\"{u}nd\"{u}z$^\ddagger$, Mikael Skoglund$^\dagger$ \vspace*{0.5em}
		\IEEEauthorblockA{\\
			$^\dagger$Division of Information Science and Engineering, KTH Royal Institute of Technology \\
			$^\ddagger$Dept. of Electrical and Electronic Engineering, Imperial College London\\
			Email: \protect amizam@kth.se, oech@kth.se, d.gunduz@imperial.ac.uk, skoglund@kth.se }}
}
\maketitle
%
\begin{abstract}
	 A private compression design problem is studied, where an encoder observes useful data $Y$, wishes to compress it using variable length code and communicates it through an unsecured channel. Since $Y$ is correlated with private attribute $X$, the encoder uses a private compression mechanism to design encoded message $\cal C$ and sends it over the channel. An adversary is assumed to have access to the output of the encoder, i.e., $\cal C$, and tries to estimate $X$.
	Furthermore, it is assumed that both encoder and decoder have access to a shared secret key $W$. 
	The design goal is to encode message $\cal C$ with minimum possible average length that satisfies a perfect privacy constraint. 
	To do so we first consider two different privacy mechanism design problems and find upper bounds on the entropy of the optimizers by solving a linear program. We use the obtained optimizers to design $\cal C$. 
	In two cases we strengthen the existing bounds: 1. $|\mathcal{X}|\geq |\mathcal{Y}|$; 2. The realization of $(X,Y)$ follows a specific joint distribution. In particular, considering the second case we use two-part construction coding to achieve the upper bounds. Furthermore, in a numerical example we study the obtained bounds and show that they can improve the existing results. 
\end{abstract}
\section{Introduction}
In this paper, random variable (RV) $Y$ denotes the useful data and is correlated with the private attribute denoted by RV $X$. An encoder wishes to compress $Y$ and communicate it with a user over an unsecured channel. The encoded message is described by RV $\cal{C}$.  As shown in Fig.~\ref{ITWsys}, it is assumed that an adversary has access to the encoded message $\mathcal{C}$, and wants to extract information about $X$. Moreover, it is assumed that the encoder and decoder have access to a shared secret key denoted by RV $W$ with size $M$. The goal is to design encoded message $\cal C$, which compresses $Y$, using a variable length code with minimum possible average length that satisfies a perfect privacy constraint. We utilize techniques used in privacy mechanism and compression design problems and combine them to build such $\cal C$.
In this work, considering two different cases we extend previous existing results \cite{kostala,kostala2}. 

Recently, the privacy mechanism and compression design problems are receiving increased attention
\cite{shannon, dworkal, dwork1, gunduz2010source, schaefer, sankar, yamamoto1988rate, Calmon2, yamamoto, issa, makhdoumi,borz,khodam,Khodam22,kostala, kostala2, calmon4, issajoon, asoo, Total, issa2, king1, king2, kosenaz, 9457633,asoodeh1,bassi,king3,deniz3}. 
Specifically, in \cite{shannon}, a notion of perfect secrecy is introduced by Shannon where the public data and private data are statistically independent. Furthermore, Shannon cipher system is analyzed where one of $M$ messages is sent over a channel wiretapped by an eavesdropper who tries to guess the message. As shown in \cite{shannon}, perfect secrecy is achievable if and only if the shared secret key length is at least $M$. Perfect secrecy, where public data and private data are independent, is also used in \cite{dworkal,dwork1}, where sanitized version of a data base is disclosed for a public use. Equivocation as a measure of information leakage for information theoretic security has been used in \cite{gunduz2010source, schaefer, sankar}. A rate-distortion approach to information theoretic secrecy is studied in \cite{yamamoto1988rate}.   
Fundamental limits of the privacy utility trade-off measuring the leakage using estimation-theoretic guarantees are studied in \cite{Calmon2}. A related source coding problem with secrecy is studied in \cite{yamamoto}.
The concept of maximal leakage has been introduced in \cite{issa} and some bounds on the privacy utility trade-off have been derived. 
The concept of privacy funnel is introduced in \cite{makhdoumi}, where the privacy utility trade-off has been studied considering the log-loss as privacy measure and a distortion measure for utility. 
The privacy-utility trade-offs considering equivocation and expected distortion as measures of privacy and utility are studied in both \cite{sankar} and \cite{yamamoto}.

In \cite{borz}, the problem of privacy-utility trade-off considering mutual information both as measures of utility and privacy is studied. It is shown that under the perfect privacy assumption, the privacy mechanism design problem can be obtained by a linear program. 
\begin{figure}[]
	\centering
	\includegraphics[scale = .1]{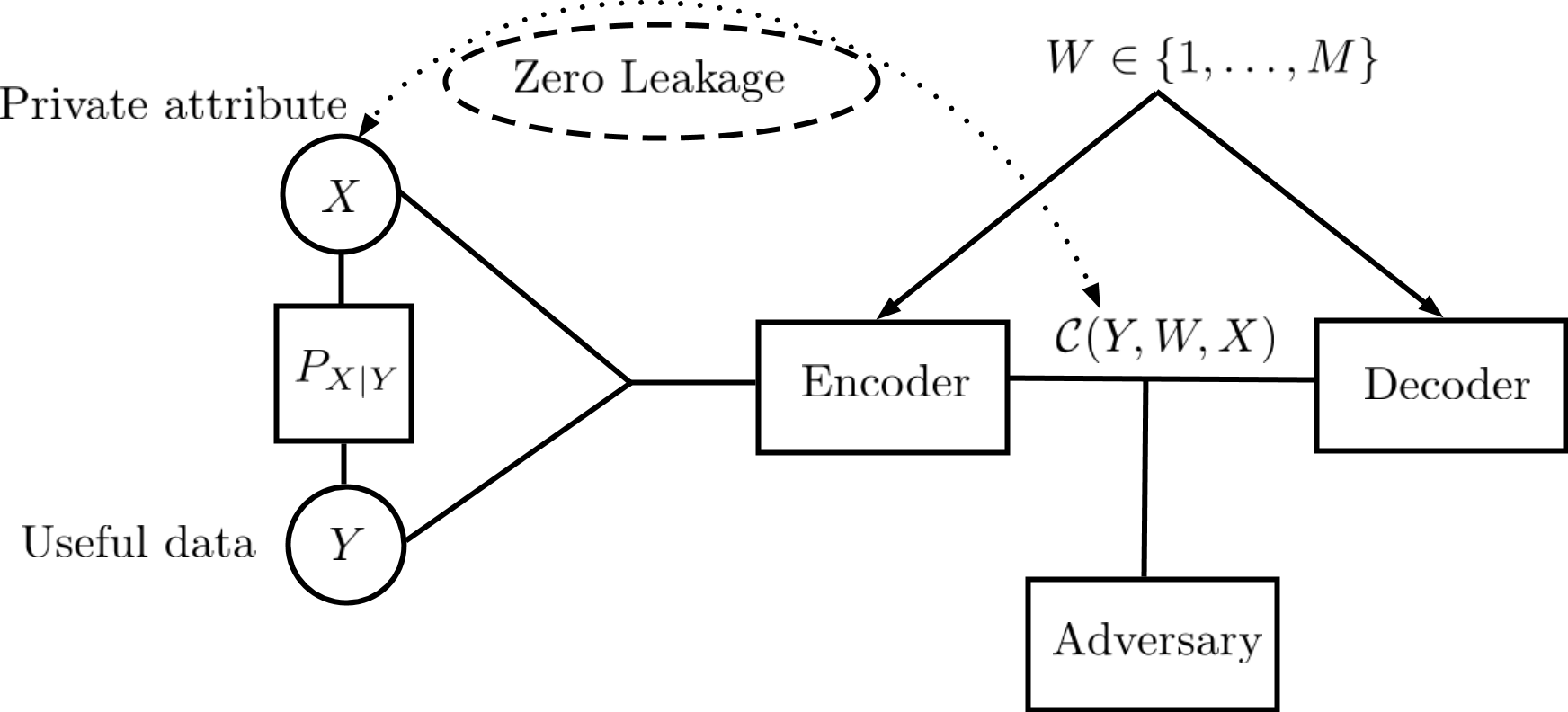}
	\caption{In this work an encoder wants to compress $Y$ which is correlated with $X$ under a zero privacy leakage constraint and send it over a channel where an eavesdropper has access to the output of the encoder. The encoder and decoder have an advantage of using shared secret key.}
	\label{ITWsys}
\end{figure}
Moreover, in \cite{borz}, it has been shown that information can be only revealed if the kernel (leakage matrix) between useful data and private data is not invertible. In \cite{khodam}, the work \cite{borz} is generalized by relaxing the perfect privacy assumption allowing some small bounded leakage. More specifically, the privacy mechanisms with a per-letter privacy criterion considering an invertible kernel are designed allowing a small leakage. This result is generalized to a non-invertible leakage matrix in \cite{Khodam22}.\\
In \cite{kostala}, an approach to partial secrecy that is called \emph{secrecy by design} has been introduced and is applied to two information processing problems: privacy mechanism design and lossless compression. For the privacy design problem, bounds on privacy-utility trade-off are derived by using the Functional Representation Lemma. These results are derived under the perfect secrecy assumption.
In \cite{king1}, the privacy problems considered in \cite{kostala} are generalized by relaxing the perfect secrecy constraint and allowing some leakages. 
In \cite{king2}, the privacy-utility trade-off with two different per-letter privacy constraints is studied. 
Moreover, in \cite{kostala}, the problems of fixed length and variable length compression have been studied and upper and lower bounds on the average length of encoded message have been derived. These results are derived under the assumption that the private data is independent of the encoded message. A similar approach has been used in \cite{kostala2}, where in a lossless compression problem the relations between secrecy, shared key and compression considering perfect secrecy, secrecy by design, maximal leakage, mutual information leakage and local differential privacy have been studied.

In this work we consider the problem in \cite{kostala}, where for the problem of lossless data compression, strong information theoretic guarantees are provided and fundamental limits are characterized when the private data is a deterministic function of the useful data. In this paper, we improve the bounds obtained in \cite{kostala}. 

To this end we combine the privacy design techniques used in \cite{king3}, which are based on extended versions of Functional Representation Lemma (FRL) and Strong Functional Representation Lemma (SFRL), as well as the lossless data compression design in \cite{kostala}. We find lower and upper bounds on the average length of the encoded message $\cal C$ and study them in different scenarios.
Considering a specific set of joint distributions for $(X,Y)$, we propose an algorithm with low complexity to find bounds on the optimizer of privacy problems in \cite{king3} with zero leakage. We use the obtained results to design $\cal C$ which achieves tighter bounds compared to \cite{kostala}. Furthermore, when $|\mathcal{X}|\geq |\mathcal{Y}|$, we propose a new code that improves the bounds in \cite{kostala}. Finally, in a numerical example we study the bounds and compare them with \cite{kostala}.  

\section{system model and Problem Formulation} \label{sec:system}
Let $P_{XY}$ denote the joint distribution of discrete random variables $X$ and $Y$ defined on alphabets $\cal{X}$ and $\cal{Y}$. We assume that cardinality $|\mathcal{X}|$ is finite and $|\mathcal{Y}|$ is finite or countably infinite.
We represent $P_{XY}$ by a matrix defined on $\mathbb{R}^{|\mathcal{X}|\times|\mathcal{Y}|}$ and
marginal distributions of $X$ and $Y$ by vectors $P_X$ and $P_Y$ defined on $\mathbb{R}^{|\mathcal{X}|}$ and $\mathbb{R}^{|\mathcal{Y}|}$ given by the row and column sums of $P_{XY}$.  
The relation between $X$ and $Y$ is given by the leakage matrix $P_{X|Y}$ defined on $\mathbb{R}^{|\mathcal{X}|\times|\cal{Y}|}$.
The shared secret key is denoted by discrete RV $W$ defined on $\{1,..,M\}$ and is assumed to be accessible by both encoder and decoder. Furthermore, we assume that $W$ is uniformly distributed and is independent of $X$ and $Y$. 
A prefix-free code with variable length and shared secret key of size $M$ is a pair of mappings:
\begin{align*}
(\text{encoder}) \ \mathcal{C}: \ \mathcal{Y}\times \{1,..,M\}\rightarrow \{0,1\}^*\\
(\text{decoder}) \ \mathcal{D}: \ \{0,1\}^*\times \{1,..,M\}\rightarrow \mathcal{Y}.
\end{align*}
The output of the encoder $\mathcal{C}(Y,W)$ describes the encoded message. 
Since the code is prefix free, any codeword in the image of $\cal C$ is not a prefix of other codewords. The variable length code $(\mathcal{C},\mathcal{D})$ is lossless if 
\begin{align}
\mathbb{P}(\mathcal{D}(\mathcal{C}(Y,W),W)=Y)=1.
\end{align}  
We define perfectly-private code.
The code $(\mathcal{C},\mathcal{D})$ is \textit{perfectly-private} if
\begin{align}
I(\mathcal{C}(Y,W);X)=0.\label{lash1}
\end{align}
Let $\xi$ be the support of $\mathcal{C}(W,Y)$ and 
for any $c\in\xi$ let $\mathbb{L}(c)$ be the length of the codeword. The code $(\mathcal{C},\mathcal{D})$ is \textit{$(\alpha,M)$-variable-length} if 
\begin{align}\label{jojo}
\mathbb{E}(\mathbb{L}(\mathcal{C}(Y,w)))\leq \alpha,\ \forall w\in\{1,..,M\}.
\end{align} 
Finally, let us define the sets $\mathcal{H}(\alpha,M)$ as follows: 
$\mathcal{H}(\alpha,M)\triangleq\{(\mathcal{C},\mathcal{D}): (\mathcal{C},\mathcal{D})\ \text{is}\ \text{perfectly-private and}\ (\alpha,M)\text{-variable-length}  \}$, 
The private compression design problems can be stated as follows
\begin{align}
\mathbb{L}(P_{XY},M)&=\inf_{\begin{array}{c} 
	\substack{(\mathcal{C},\mathcal{D}):(\mathcal{C},\mathcal{D})\in\mathcal{H}(\alpha,M)}
	\end{array}}\alpha,\label{main1}
\end{align}
Furthermore, let us recall the privacy mechanism design problems considered in \cite{king1} with zero leakage as follows
\begin{align}
g_{0}(P_{XY})&=\max_{\begin{array}{c} 
	\substack{P_{U|Y}:X-Y-U\\ \ I(U;X)=0,}
	\end{array}}I(Y;U),\label{maing}\\
h_{0}(P_{XY})&=\max_{\begin{array}{c} 
	\substack{P_{U|Y,X}: I(U;X)=0,}
	\end{array}}I(Y;U).\label{mainh}
\end{align} 
Finally, we define a set of joint distributions $\hat{\mathcal{P}}_{XY}$ as follows
\begin{align}
\hat{\mathcal{P}}_{XY}\triangleq \{P_{XY}:g_{0}(P_{XY})=h_{0}(P_{XY})\}.
\end{align}
 \section{Main Results}\label{sec:resul}
 In this section, we first study the properties of $\hat{\mathcal{P}}_{XY}$ and provide a sufficient condition on $P_{XY}$ so that it belongs to the corresponding set. We then find upper bounds on the entropy of the optimizers of $g_{0}(P_{XY})$ and $h_{0}(P_{XY})$. The obtained bounds help us to find upper bounds on $\mathbb{L}(P_{XY},M)$. In other words, for $P_{XY}\in\hat{\mathcal{P}}_{XY}$ we use the solutions of $g_{0}(P_{XY})$ and $h_{0}(P_{XY})$ to design $\mathcal{C}$, however in \cite{kostala}, the design of the code is based on Functional Representation Lemma. Finally, we provide lower bounds on $\mathbb{L}(P_{XY},M)$, study the bounds in a numerical example and compare them with \cite{kostala}.
 
 In the next lemma, let $C(X,Y)$ denote the common information between $X$ and $Y$, where common information corresponds to the Wyner \cite{wyner} or G{\'a}cs-K{\"o}rner \cite{gacs1973common} notions of common information.
 \begin{lemma}\label{set}
 	If $C(X,Y)=I(X;Y)$, then $P_{XY}\in\hat{\mathcal{P}}_{XY}$.
 \end{lemma}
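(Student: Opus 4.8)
The plan is to prove that, under the hypothesis, both optimization problems attain the common value $H(Y|X)$. First I would record the trivial direction: every kernel $P_{U|Y}$ that is feasible for \eqref{maing} (i.e.\ satisfies $X-Y-U$ together with $I(U;X)=0$) is also feasible for \eqref{mainh}, since the latter only imposes $I(U;X)=0$ while allowing $U$ to depend on $(X,Y)$. Hence the feasible set of $g_0$ is contained in that of $h_0$ and $g_{0}(P_{XY})\le h_{0}(P_{XY})$. It therefore suffices to establish $h_{0}(P_{XY})\le g_{0}(P_{XY})$, and I would do this by showing $h_{0}(P_{XY})\le H(Y|X)\le g_{0}(P_{XY})$.

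The upper bound on $h_0$ uses only $I(X;U)=0$. For any feasible $U$,
\begin{align*}
I(Y;U)&\le I(Y;U)+I(X;U|Y)=I(X,Y;U)\\
&=I(X;U)+I(Y;U|X)=I(Y;U|X)\le H(Y|X),
\end{align*}
so $h_{0}(P_{XY})\le H(Y|X)$ for every $P_{XY}$.

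Next I would extract structure from the hypothesis. The equality $C(X,Y)=I(X;Y)$ provides a common random variable $V=f(X)=g(Y)$, a deterministic function of $X$ and simultaneously of $Y$, for which $X$ and $Y$ are conditionally independent given $V$; for the G\'acs--K\"orner notion this is precisely the equality case $I(X;Y|V)=0$ in $C(X,Y)\le I(X;Y)$ with $V$ the maximal common part, and for the Wyner notion it is the corresponding collapse characterization. Since $V$ is a function of $X$ and $X\perp Y\mid V$, conditioning on $X$ is at least as fine as conditioning on $V$, and the conditional independence yields $H(Y|X)=H(Y|X,V)=H(Y|V)$.

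Finally, to show $g_{0}(P_{XY})\ge H(Y|X)=H(Y|V)$ I would construct an explicit $P_{U|Y}$. Using that $V=g(Y)$ is a function of $Y$, I fix a reference law $Q$ (say the uniform distribution on $[0,1]$) and, for each block $v$, partition the mass of $Q$ into disjoint pieces of sizes $\{P_{Y|V}(y\mid v)\}_{g(y)=v}$; given $Y=y$ I draw $U$ from the piece associated with $y$. By construction $U$ is a stochastic function of $Y$, so $X-Y-U$ holds; the pieces within a block are disjoint, so $V$ and $U$ jointly determine $Y$ and thus $I(Y;U)=I(Y;U|V)=H(Y|V)$; and the conditional law of $U$ given $V=v$ equals $Q$ for every $v$, so $U$ is independent of $V$. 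Because $U$ is a function of $Y$ and $X\perp Y\mid V$, one checks $P_{U|X}(\cdot\mid x)=P_{U|V}(\cdot\mid f(x))=Q$ independently of $x$, whence $I(U;X)=0$ and the kernel is feasible for $g_0$. This gives $g_{0}(P_{XY})\ge H(Y|V)=H(Y|X)$, and combining the three estimates forces $g_{0}(P_{XY})=h_{0}(P_{XY})=H(Y|X)$, i.e.\ $P_{XY}\in\hat{\mathcal{P}}_{XY}$. I expect the delicate step to be this construction: guaranteeing that a single reference law $Q$ can be split consistently across all blocks while keeping $U$ informative within each block (which is why a continuous, or sufficiently fine, alphabet is convenient, with a limiting argument ensuring the supremum is reached), together with the clean justification that $C(X,Y)=I(X;Y)$ really does furnish a common \emph{function} $V$ in the Wyner case.
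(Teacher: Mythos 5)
Your argument is necessarily a different route from the paper's, because the paper does not actually prove this lemma: it only cites \cite[Proposition 6]{king3} (with $\epsilon=0$) and \cite[Theorem 2]{7888175}. Your self-contained skeleton is the right one, and matches what those references do: the sandwich $g_0(P_{XY})\le h_0(P_{XY})\le H(Y|X)\le g_0(P_{XY})$. The feasible-set inclusion, the data-processing chain $I(Y;U)\le I(X,Y;U)=I(Y;U|X)\le H(Y|X)$, the identity $H(Y|X)=H(Y|X,V)=H(Y|V)$, and the block-partition construction (which is just the Functional Representation Lemma run separately inside each block of the common part) are all correct. Two of the worries you raise at the end are unfounded: no cross-block consistency in splitting $Q$ is required, since each block $v$ partitions the whole mass of $Q$ on its own, which is possible precisely because $\sum_{y:g(y)=v}P_{Y|V}(y\mid v)=1$; and no continuous alphabet or limiting argument is needed, since choosing the pieces as intervals and replacing $U$ by the index of the cell in the common refinement of the block partitions gives a discrete $U$ that attains $I(Y;U)=H(Y|V)$ exactly, while preserving both $X-Y-U$ and $P_{U|V=v}\equiv Q$ (hence $I(U;X)=0$).

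The one genuine gap is the point you flag yourself: the lemma, as stated in the paper, covers the Wyner notion of common information as well, and for that notion the claim that $C(X,Y)=I(X;Y)$ furnishes a common \emph{function} $V=f(X)=g(Y)$ with $X\perp Y\mid V$ is nontrivial and is used but not proved. It can be repaired as follows. Let $V$ attain the Wyner minimum, so $I(X;Y|V)=0$ and $I(X,Y;V)=I(X;Y)$. From $I(X;Y|V)=0$ one gets $I(X;V)\ge I(X;Y)$ and $I(Y;V)\ge I(X;Y)$; expanding $I(X,Y;V)=I(X;V)+I(Y;V|X)=I(Y;V)+I(X;V|Y)$ then forces $I(Y;V|X)=I(X;V|Y)=0$. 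By the double-Markov property (Csisz\'ar--K\"orner), these two Markov chains yield a common function $W_0=f(X)=g(Y)$ with $I(X,Y;V|W_0)=0$, and then $I(X;Y|W_0)\le I(X;V|W_0)+I(X;Y|V,W_0)=0$, so $X\perp Y\mid W_0$. Running your construction with $W_0$ in place of $V$ closes the Wyner case; without this step (or an equivalent reduction), your proof establishes the lemma only for the G\'acs--K\"orner notion.
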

 \begin{proof}
 	The proof is provided in \cite[Proposition 6]{king3} with $\epsilon=0$ and follows similar arguments as the proof of \cite[Theorem 2]{7888175}.
 \end{proof}
 \begin{corollary}
 	If $X$ is a deterministic function of $Y$, then $P_{XY}\in\hat{\mathcal{P}}_{XY}$, since in this case we have $C(X,Y)=I(X;Y)$. Furthermore, this result has also been shown in \cite[Proposition 6]{king3}.  
 \end{corollary}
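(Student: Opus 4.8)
The plan is to invoke Lemma~\ref{set}, which already certifies membership in $\hat{\mathcal{P}}_{XY}$ whenever $C(X,Y)=I(X;Y)$. Thus the corollary reduces to verifying this single equality under the deterministic hypothesis $X=f(Y)$ for some function $f$. First I would record that $X=f(Y)$ forces $H(X\mid Y)=0$, so that $I(X;Y)=H(X)-H(X\mid Y)=H(X)$. The remaining task is to show that the common information $C(X,Y)$ equals $H(X)$ as well, and since the lemma allows either the Wyner or the G\'acs--K\"orner notion, I would handle both to keep the argument notion-agnostic.

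For each notion I would combine the universal inequality with an explicit achievability choice. For the Wyner common information $C_W(X,Y)=\min_{X-V-Y}I(XY;V)$, one always has $C_W(X,Y)\geq I(X;Y)$; to obtain the matching upper bound I would take the auxiliary variable $V=X$. Because the conditional law of $X$ given $V=X$ is a point mass, the chain $X-V-Y$ holds, and $I(XY;V)=I(XY;X)=H(X)$ since $H(X\mid X,Y)=0$, giving $C_W(X,Y)\leq H(X)=I(X;Y)$ and hence equality. For the G\'acs--K\"orner common information, which always satisfies $C_{GK}(X,Y)\leq I(X;Y)$, I would argue from below instead: the variable $X$ is extractable deterministically from $Y$ (via $f$) and trivially from $X$ itself, so it is an admissible common function and $C_{GK}(X,Y)\geq H(X)=I(X;Y)$. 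Either way $C(X,Y)=I(X;Y)$, and Lemma~\ref{set} then delivers $P_{XY}\in\hat{\mathcal{P}}_{XY}$.

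The step I expect to require the most care is the Markov-chain verification in the Wyner case: one must confirm that conditioning on $V=X$ genuinely renders $X$ and $Y$ conditionally independent, rather than merely fixing the value of $X$. Once that point is settled the entropy computations are immediate. This matches the independent derivation cited as \cite[Proposition 6]{king3}, which may be referenced in place of repeating the common-information bookkeeping.
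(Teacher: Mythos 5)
Your proposal is correct and follows essentially the same route as the paper: both reduce the corollary to Lemma~\ref{set} by verifying $C(X,Y)=I(X;Y)$ under the hypothesis $X=f(Y)$. The only difference is that you explicitly carry out the common-information bookkeeping (for both the Wyner and G\'acs--K\"orner notions, each of which is handled soundly), whereas the paper delegates this verification to the citation of \cite[Proposition 6]{king3}.
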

In the next result we provide properties of the optimizers for $g_{0}(P_{XY})$ and $h_{0}(P_{XY})$.
 \begin{lemma}\label{mohem}
 	If $g_{0}(P_{XY})=h_{0}(P_{XY})$ then
 	\begin{align}\label{1}
 	g_{0}(P_{XY})=h_{0}(P_{XY})=H(Y|X),
 	\end{align}
 	and both $g_{0}(P_{XY})$ and $h_{0}(P_{XY})$ have the same set of optimizers. Furthermore, for any optimizer $U^*$ we have 
 	\begin{align}
 	H(Y|U^*,X)=0,\label{2}\\
 	I(X;U^*|Y)=0,\label{3}\\
 	I(X;U^*)=0.\label{4}
 	\end{align}  
 	Finally, if $U$ satisfies \eqref{2}, \eqref{3}, and \eqref{4}, then it is an optimizer for both $g_{0}(P_{XY})$ and $h_{0}(P_{XY})$.
 \end{lemma}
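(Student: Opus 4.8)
The plan is to base everything on one chain-rule identity. For any $U$ with $I(X;U)=0$, expanding $I(X,Y;U)$ in two ways gives $I(Y;U)+I(X;U|Y)=I(X;U)+I(Y;U|X)$, and since $I(X;U)=0$ this rearranges to
\begin{align}
I(Y;U)=H(Y|X)-H(Y|U,X)-I(X;U|Y).\nonumber
\end{align}
First I would read off the two cheap facts: $g_{0}\le h_{0}$ (the Markov constraint $X-Y-U$ in \eqref{maing} only shrinks the feasible set of \eqref{mainh}), and $g_{0}\le h_{0}\le H(Y|X)$ (both subtracted terms are nonnegative). The same identity pins down the equality case: $I(Y;U)=H(Y|X)$ holds \emph{iff} $H(Y|U,X)=0$ and $I(X;U|Y)=0$, i.e.\ exactly conditions \eqref{2} and \eqref{3}.

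Next I would dispatch the converse, and with it the value. If $U$ satisfies \eqref{2}, \eqref{3}, \eqref{4}, the identity gives $I(Y;U)=H(Y|X)$; moreover \eqref{3} is the Markov chain $X-Y-U$ and \eqref{4} is perfect privacy, so $U$ is feasible for \eqref{maing}, while \eqref{4} alone makes it feasible for \eqref{mainh}. Since both optima are bounded by $H(Y|X)$, such a $U$ simultaneously maximizes $g_{0}$ and $h_{0}$ and forces $g_{0}=h_{0}=H(Y|X)$. Hence the whole statement reduces to exhibiting, under the hypothesis $g_{0}=h_{0}$, a single $U^{*}$ meeting \eqref{2}--\eqref{4}.

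For the forward direction I would start from any optimizer $U$ of $h_{0}$ and first arrange $H(Y|U,X)=0$ for free: augment it to $U^{*}=(U,V)$, where $V$ is the Functional Representation Lemma variable applied to the residual pair $(X,Y)$ given $U$, so that $V$ is conditionally independent of $X$ given $U$ and $Y$ becomes a deterministic function of $(U^{*},X)$. This keeps $I(X;U^{*})=0$ and cannot decrease $I(Y;U^{*})$, so $U^{*}$ is still an $h_{0}$-optimizer and now satisfies \eqref{2}; the identity then reads $h_{0}=H(Y|X)-I(X;U^{*}|Y)$. The entire theorem therefore hinges on the single claim $I(X;U^{*}|Y)=0$, i.e.\ that an $h_{0}$-optimal mechanism must be private \emph{conditionally} on $Y$ as well.

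This last equality is where the hypothesis $g_{0}=h_{0}$ must do real work, and I expect it to be the main obstacle. It is tempting to ``Markovize'' $U^{*}$ by feeding $Y$ through the kernel $P_{U^{*}|Y}$; this preserves the joint law of $(Y,U^{*})$, hence $I(Y;U^{*})$, but it generally destroys the unconditional constraint $I(X;U)=0$, so it does not by itself yield a feasible point of \eqref{maing}. That this failure is unavoidable in general is shown by a binary symmetric $P_{X|Y}$, for which $g_{0}=0<h_{0}<H(Y|X)$: positive conditional leakage $I(X;U^{*}|Y)$ genuinely occurs precisely when $g_{0}<h_{0}$. I would therefore argue the contrapositive: whenever $h_{0}<H(Y|X)$ --- equivalently $I(X;U^{*}|Y)>0$ --- the extra freedom in \eqref{mainh} of letting $U$ depend on $X$ is strictly useful, so $g_{0}<h_{0}$; hence $g_{0}=h_{0}$ forces $I(X;U^{*}|Y)=0$. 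To make this rigorous I would lean on the common-information characterization behind Lemma~\ref{set}, following \cite[Prop.~6]{king3} and \cite[Thm.~2]{7888175}, which ties membership in $\hat{\mathcal{P}}_{XY}$ to the situation where all of $H(Y|X)$ can be conveyed with zero conditional leakage. Once $I(X;U^{*}|Y)=0$ is in hand, \eqref{2}--\eqref{4} hold and the common value is $H(Y|X)$; the remaining assertions are then immediate, since every optimizer attains $I(Y;\cdot)=H(Y|X)$ and so obeys \eqref{2}--\eqref{4} by the equality case, while \eqref{3} being the Markov constraint makes the optimizer sets of $g_{0}$ and $h_{0}$ coincide.
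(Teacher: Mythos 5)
Your scaffolding is sound and, up to the crux, matches the paper's route: the key identity $I(Y;U)=I(X;U)+H(Y|X)-I(X;U|Y)-H(Y|X,U)$, the bounds $g_0\leq h_0\leq H(Y|X)$, the equality-case reading of \eqref{2}--\eqref{3}, the conditional-FRL augmentation that makes $H(Y|X,U^*)=0$ without losing optimality or independence, and the closing observation that once the value is pinned at $H(Y|X)$ the optimizer sets coincide and every optimizer satisfies \eqref{2}--\eqref{4}. But there is a genuine gap at exactly the step you yourself flag as the obstacle: you never prove that $g_0=h_0$ forces $I(X;U^*|Y)=0$, i.e.\ that the common value is $H(Y|X)$. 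Writing ``whenever $h_0<H(Y|X)$ the extra freedom in \eqref{mainh} is strictly useful, so $g_0<h_0$'' is not an argument; it is verbatim the contrapositive of the claim to be proved. And the authorities you lean on point the wrong way: Lemma~\ref{set}, \cite[Prop.~6]{king3} and \cite[Thm.~2]{7888175} give a \emph{sufficient} condition ($C(X;Y)=I(X;Y)$ implies $P_{XY}\in\hat{\mathcal{P}}_{XY}$, resp.\ characterize when $g_0=H(Y|X)$); none of them asserts the implication $g_0=h_0\Rightarrow g_0=H(Y|X)$, which is what the forward direction needs. The paper itself closes this step by invoking \cite[Theorem~7]{kostala} together with the key equation, an external ingredient your proof has no substitute for.

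The gap can be closed with a tool the paper already uses elsewhere, namely \cite[Theorem~5]{kostala} ($Y\neq f(X)$ implies $h_0>0$), applied slice-wise. Take a $g_0$-optimizer $U_g$ (so $X-Y-U_g$ and $I(X;U_g)=0$). If $H(Y|X,U_g)>0$, pick $u^*$ with $P(u^*)>0$ and $H(Y|X,U_g=u^*)>0$; applying \cite[Theorem~5]{kostala} to $P_{XY|U_g=u^*}$ gives $V^*$ with $I(V^*;X|U_g=u^*)=0$ and $I(V^*;Y|U_g=u^*)>0$. Let $V$ equal $V^*$ on that slice and a constant elsewhere; then $I(X;U_g,V)=I(X;V|U_g)=0$, so $(U_g,V)$ is feasible for \eqref{mainh}, while $I(Y;U_g,V)=g_0+P(u^*)\,I(V^*;Y|U_g=u^*)>g_0=h_0$, a contradiction. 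Hence $H(Y|X,U_g)=0$, and the key identity (with $I(X;U_g|Y)=0$ from the Markov chain) yields $g_0=H(Y|X)$; from there the rest of your argument goes through unchanged.
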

 \begin{proof}
 	All the statements can be shown using \cite[Theorem 7]{kostala} and the key equation as follows
 	\begin{align}\label{key}
 	I(U;Y)\!=\!I(X;U)\!+\!H(Y|X)\!-\!I(X;U|Y)\!-\!H(Y|X,U).
 	\end{align}
 \end{proof}
\begin{remark}
	The optimization problems in \eqref{maing} and \eqref{mainh} do not have unique optimizers. For instance, let $X=f(Y)$, by using the constructions used in \cite{kostala} and \cite{kosnane} we can attain the optimum value.
\end{remark}
 Next, we show that if $P_{XY}\in\hat{\mathcal{P}}_{XY}$, without loss of optimality we can assume that $|\mathcal{U}|\leq \text{null}(P_{X|Y})+1$, where $\text{nul}(P_{X|Y})$ corresponds to the dimension of the null space of $P_{X|Y}$.  
 \begin{lemma}\label{choon}
 	For any $P_{XY}$ we have
 	\begin{align}\label{jensi}
 	g_{0}(P_{XY})= \max_{\begin{array}{c} 
 		\substack{P_{U|Y}:X-Y-U,\\ I(X;U)=0,\\ |\mathcal{U}|\leq\ \text{null}(P_{X|Y})+1}
 		\end{array}}I(Y;U).
 	\end{align}
 	Furthermore, let $P_{XY}\in\hat{\mathcal{P}}_{XY}$. Then
 	\begin{align}
 	h_{0}(P_{XY})&=g_{0}(P_{XY})\nonumber\\&= \max_{\begin{array}{c} 
 		\substack{P_{U|Y}:X-Y-U\\ \ I(X;U)=0,\\ |\mathcal{U}|\leq\ \text{null}(P_{X|Y})+1}
 		\end{array}}I(Y;U)\label{he}\\&= \max_{\begin{array}{c} 
 		\substack{P_{U|Y}: I(X;U)=0,\\ |\mathcal{U}|\leq\ \text{null}(P_{X|Y})+1}
 		\end{array}}I(Y;U)\label{hehe}
 	\end{align}
 \end{lemma}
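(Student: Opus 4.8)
The plan is to reduce all three maximizations to a support (cardinality) bound on the conditional laws $P_{Y|U=u}$ and then transfer that bound across the problems using Lemma~\ref{mohem}. First I would rewrite the perfect‑privacy constraint of \eqref{maing} in terms of the conditional distributions of $Y$. Under the Markov chain $X-Y-U$, the constraint $I(X;U)=0$ is equivalent to $P_{X|U=u}=P_X$ for every $u$ in the support of $U$, that is $P_{X|Y}\,P_{Y|U=u}=P_X$ when $P_{Y|U=u}$ is viewed as a column vector in $\mathbb{R}^{|\mathcal{Y}|}$. Since $P_{X|Y}\,P_Y=P_X$ always holds, this forces $P_{Y|U=u}-P_Y\in\ker(P_{X|Y})$, so every feasible conditional lies in the affine subspace $P_Y+\ker(P_{X|Y})$ of dimension $\mathrm{null}(P_{X|Y})$. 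Conversely, any choice of weights $P_U$ and conditionals $\{P_{Y|U=u}\}$ drawn from this subspace with $\sum_u P_U(u)P_{Y|U=u}=P_Y$ reconstructs, via $P_{U|Y}(u|y)=P_U(u)P_{Y|U=u}(y)/P_Y(y)$, a valid channel satisfying $X-Y-U$ and $I(X;U)=0$, so the feasible set is exactly parametrized by this affine subspace.

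To prove \eqref{jensi} I would then run a support‑lemma argument. Writing $I(Y;U)=H(Y)-\sum_u P_U(u)H(P_{Y|U=u})$, maximizing $I(Y;U)$ is the same as minimizing the average entropy $\sum_u P_U(u)H(P_{Y|U=u})$ subject to the mixture constraint $\sum_u P_U(u)P_{Y|U=u}=P_Y$. Because all feasible conditionals and $P_Y$ itself live in the $\mathrm{null}(P_{X|Y})$‑dimensional affine subspace identified above, matching the marginal imposes only $\mathrm{null}(P_{X|Y})$ scalar conditions, to which one adds a single condition preserving the objective $H(Y|U)$. Applying the Fenchel--Eggleston--Carath\'eodory strengthening of the support lemma to the (convex, hence connected) feasible polytope, the optimal pair $(P_Y,H(Y|U^*))$ is realized by a convex combination of at most $\mathrm{null}(P_{X|Y})+1$ feasible conditionals, yielding an optimizer with $|\mathcal{U}|\leq\mathrm{null}(P_{X|Y})+1$ and the same value; this establishes \eqref{jensi} for every $P_{XY}$.

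For the second part I would invoke the hypothesis $P_{XY}\in\hat{\mathcal{P}}_{XY}$ together with Lemma~\ref{mohem}. By definition $h_0(P_{XY})=g_0(P_{XY})$, and \eqref{jensi} already gives the equality between $g_0(P_{XY})$ and the right‑hand side of \eqref{he}. It remains to justify \eqref{hehe}, where the Markov constraint $X-Y-U$ is dropped. The inequality ``\eqref{hehe}$\,\leq h_0(P_{XY})$'' is immediate, since the feasible set in \eqref{hehe} is that of $h_0$ with the extra cardinality restriction. For the reverse, Lemma~\ref{mohem} shows that when $g_0=h_0$ every optimizer satisfies $I(X;U^*|Y)=0$, i.e.\ the Markov chain $X-Y-U^*$ holds; combining this with \eqref{jensi}, the common value $h_0=g_0$ is already attained by a $U^*$ that is feasible for \eqref{hehe} and obeys $|\mathcal{U}|\leq\mathrm{null}(P_{X|Y})+1$, so ``\eqref{hehe}$\,\geq h_0(P_{XY})$''. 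The two inequalities give \eqref{hehe} and close the chain.

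I expect the main obstacle to be the dimension bookkeeping in the support‑lemma step. A naive Carath\'eodory bound applied to the graph $\{(P_{Y|U=u},H(P_{Y|U=u}))\}$, which sits in an affine space of dimension $\mathrm{null}(P_{X|Y})+1$, would produce $\mathrm{null}(P_{X|Y})+2$ points; getting the exact bound $\mathrm{null}(P_{X|Y})+1$ requires exploiting connectedness of the feasible polytope (equivalently, that $P_Y$ already lies in the affine hull of the feasible conditionals) to shave off the extra point. Care is also needed to verify that the reconstructed distribution still respects $I(X;U)=0$ and the full joint $P_{XY}$, which is where the affine‑subspace characterization from the first paragraph does the work.
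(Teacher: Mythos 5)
Your proof is correct, and it follows the same two-step skeleton as the paper's: first establish the cardinality bound \eqref{jensi} for $g_0(P_{XY})$, then identify \eqref{he} and \eqref{hehe} by sandwiching both between the common value $g_0(P_{XY})=h_0(P_{XY})$. The difference is in the first step: the paper disposes of \eqref{jensi} by citing \cite[Theorem 1]{borz}, while you reprove it from scratch --- observing that under $X-Y-U$ perfect privacy forces $P_{Y|U=u}-P_Y\in\ker(P_{X|Y})$, so the feasible conditionals form a polytope of dimension $\text{null}(P_{X|Y})$, and then applying the Fenchel--Eggleston--Carath\'eodory refinement (using connectedness of the graph of the entropy functional over this polytope) to cut the naive bound of $\text{null}(P_{X|Y})+2$ support points down to $\text{null}(P_{X|Y})+1$. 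This is essentially the argument behind the cited result, so what your write-up buys is self-containedness, and you correctly isolate the one delicate point (the connectedness refinement) that a bare Carath\'eodory count would miss. One streamlining remark: for the direction $\eqref{hehe}\geq h_0(P_{XY})$ you invoke Lemma~\ref{mohem} to argue that every optimizer satisfies the Markov chain, but this is not needed --- by \eqref{jensi} some optimizer $U^*$ of \eqref{he} attains $g_0(P_{XY})=h_0(P_{XY})$ and is automatically feasible for \eqref{hehe}, which is exactly the paper's (equally short) argument; the reverse inequality $\eqref{hehe}\leq h_0(P_{XY})$ holds, as you note, because \eqref{hehe} maximizes over a subset of the feasible set of $h_0(P_{XY})$.
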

 \begin{proof}
 	The proof of \eqref{jensi} is provided in \cite[Theorem 1]{borz}. It only remains to show the equality $\eqref{he}=\eqref{hehe}$. Let $U^*$ be an optimizer of \eqref{he}. Using \eqref{jensi}, it is also an optimizer of $g_0(P_{XY})$ and hence $h_0(P_{XY})$. Thus, $\eqref{he}=\eqref{hehe}$. In other words, for $P_{XY}\in\hat{\mathcal{P}}_{XY}$, we can assume $|\mathcal{U}|\leq \text{null}(P_{X|Y})+1$ in both problems $g_0(P_{XY})$ and $h_0(P_{XY})$.   
 \end{proof}
 Before stating the next result let us define a set $\mathcal{U}^1(P_{XY})$ and a function $\mathcal{K}(P_{XY})$ as follows
 \begin{align}
 \mathcal{U}^1(P_{XY})&\triangleq \{U: U\ \text{satisfies \eqref{2}, \eqref{3}, \eqref{4}}\}\\
 \mathcal{K}(P_{XY})&\triangleq \min_{U\in \mathcal{U}^1(P_{XY})} H(U).
 \end{align}
 Note that $ \mathcal{U}^1(P_{XY})$ can be empty for some joint distribution $P_{XY}$, however using Lemma \ref{mohem} when $P_{XY}\in\hat{\mathcal{P}}_{XY}$ it is the set which contains all the optimizers satisfying $g_0(P_{XY})=h_0(P_{XY})$.  
 Furthermore, the function $\mathcal{K}(P_{XY})$ finds the minimum entropy of all optimizers satisfying $g_0(P_{XY})=h_0(P_{XY})$.  
 \begin{lemma}\label{choon1}
 	Let $P_{XY}\in\hat{\mathcal{P}}_{XY}$. We have
 	\begin{align}
 	\mathcal{K}(P_{XY})\leq \log(\text{null}(P_{X|Y})+1).
 	\end{align}
 \end{lemma}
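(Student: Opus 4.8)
The plan is to exhibit a single optimizer whose alphabet is small enough, and then control its entropy by the trivial maximum-entropy bound. Since $\mathcal{K}(P_{XY})$ is defined as a minimum over $\mathcal{U}^1(P_{XY})$, it suffices to produce one element $U^*\in\mathcal{U}^1(P_{XY})$ with $H(U^*)\le\log(\text{null}(P_{X|Y})+1)$; the minimum defining $\mathcal{K}(P_{XY})$ can then only be smaller.

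First I would invoke Lemma~\ref{choon}. Because $P_{XY}\in\hat{\mathcal{P}}_{XY}$, equality~\eqref{he} guarantees that the maximum defining $g_0(P_{XY})=h_0(P_{XY})$ is attained by some conditional law $P_{U^*|Y}$ whose support satisfies $|\mathcal{U}^*|\le\text{null}(P_{X|Y})+1$. Next I would identify this $U^*$ as a member of $\mathcal{U}^1(P_{XY})$: by Lemma~\ref{mohem}, when $g_0(P_{XY})=h_0(P_{XY})$ every optimizer of $g_0$ (equivalently $h_0$) satisfies the three conditions \eqref{2}, \eqref{3} and \eqref{4}, which is exactly the defining property of $\mathcal{U}^1(P_{XY})$. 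Hence $U^*\in\mathcal{U}^1(P_{XY})$ and this set is nonempty.

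The final step is the cardinality-to-entropy bound. For any discrete random variable one has $H(U^*)\le\log|\mathcal{U}^*|$, with equality for the uniform law; combining this with $|\mathcal{U}^*|\le\text{null}(P_{X|Y})+1$ and with the definition $\mathcal{K}(P_{XY})=\min_{U\in\mathcal{U}^1(P_{XY})}H(U)\le H(U^*)$ yields the claim $\mathcal{K}(P_{XY})\le\log(\text{null}(P_{X|Y})+1)$.

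The only delicate point — and the step I would check most carefully — is the chaining of the two lemmas: I must ensure that the specific optimizer delivered with the reduced alphabet by Lemma~\ref{choon} is the same object that Lemma~\ref{mohem} certifies to lie in $\mathcal{U}^1(P_{XY})$. This is where the hypothesis $P_{XY}\in\hat{\mathcal{P}}_{XY}$ is essential, since it is precisely the condition $g_0(P_{XY})=h_0(P_{XY})$ that makes ``optimizer of $g_0$'' and ``element of $\mathcal{U}^1(P_{XY})$'' coincide; without it $\mathcal{U}^1(P_{XY})$ could be empty and $\mathcal{K}(P_{XY})$ undefined. Everything else is routine.
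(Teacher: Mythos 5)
Your proof is correct and takes essentially the same route as the paper: invoke Lemma~\ref{choon} to obtain an optimizer $U^*$ of $g_0(P_{XY})$ with $|\mathcal{U}^*|\leq \text{null}(P_{X|Y})+1$, then bound $\mathcal{K}(P_{XY})\leq H(U^*)\leq \log|\mathcal{U}^*|$. You are in fact slightly more careful than the paper's own two-line proof, which leaves implicit the step you flag as delicate, namely that (via Lemma~\ref{mohem}, using $g_0(P_{XY})=h_0(P_{XY})$) the reduced-cardinality optimizer indeed belongs to $\mathcal{U}^1(P_{XY})$, so that the inequality $\mathcal{K}(P_{XY})\leq H(U^*)$ is legitimate.
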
 
 \begin{proof}
 	The proof follows from Lemma \ref{choon}. Let $U^*$ be any optimizer of $g_0(P_{XY})$ that satisfies $|\mathcal{U}^*|\leq \text{null}(P_{X|Y})+1$. We have $\mathcal{K}(P_{XY})\leq  H(U^*) \leq\log(\text{null}(P_{X|Y})+1)$.
 \end{proof}
 \begin{remark}
 	Lemma \ref{choon1} asserts that when $P_{XY}\in\hat{\mathcal{P}}_{XY}$, there exist a $U$ that satisfies \eqref{2}, \eqref{3}, \eqref{4}, and 
 	\begin{align}
 	H(U)\leq \log(\text{null}(P_{X|Y})+1).
 	\end{align} 
 \end{remark}
\begin{remark}
	The upper bound obtained in Lemma \ref{choon1} can be significantly smaller than the upper bound found in Lemma \cite[Lemma 2]{kostala}. This helps us to find codes that have less average length compared to \cite{kostala}.
\end{remark}
 Before stating the next result we define 
 \begin{align}
 A_{XY}&\triangleq \begin{bmatrix}
 &P_{y_1}-P_{y_1|x_1} &\ldots & P_{y_{|\mathcal{Y}|}}-P_{y_{q}|x_1}\\
 	&\cdot &\ldots &\cdot\\
 	&P_{y_1}-P_{y_1|x_{t}} &\ldots & P_{y_{q}}-P_{y_{q}|x_{t}}
 \end{bmatrix}\!\in\! \mathbb{R}^{t\times q},\\
 b_{XY}&\triangleq \begin{bmatrix}
 H(Y|x_1)-H(Y|X) \\
 \cdot \\
 H(Y|x_t)-H(Y|X)
 \end{bmatrix}\in\mathbb{R}^{t},\ \bm{a}\triangleq \begin{bmatrix}
 a_1 \\
 \cdot \\
 a_q
 \end{bmatrix}\in\mathbb{R}^{q}.
 \end{align}
 where $t=|\mathcal{X}|$ and $q=|\mathcal{Y}|$.
 \begin{theorem}\label{11}
 	Let $P_{XY}\in\hat{\mathcal{P}}_{XY}$. For any $U\in\mathcal{U}^1$ we have
 	\begin{align}
 	H(Y|X)+\!\!\!\!\!\min_{a_i:A_{XY}\bm{a}=b_{XY},\bm{a}\geq 0}\sum_{i=1}^{q} P_{y_i}a_i\leq \mathcal{K}(P_{XY})\leq \label{lower}\\
 	 H(U)\leq H(Y|X)+\!\!\!\!\!\max_{a_i:A_{XY}\bm{a}=b_{XY},\bm{a}\geq 0}\sum_{i=1}^{q} P_{y_i}a_i.\label{tala}
 	\end{align}
 	The upper bound can be strengthened and we have
 	\begin{align}\label{koontala}
 	\mathcal{K}(P_{XY}) &\leq H(Y|X)+\!\!\!\!\!\!\!\!\!\!\!\!\!\!\!\!\!\!\!\!\!\!\!\!\!\!\!\!\!\max_{\begin{array}{c}
 		\substack{a_i:A_{XY}\bm{a}=b_{XY},\bm{a}\geq 0,\\
 		\sum_{i=1}^{q} P_{y_i}a_i\leq \log(\text{null}(P_{X|Y})+1)-H(Y|X)} \end{array} }\!\!\!\!\sum_{i=1}^{q} P_{y_i}a_i\\&\leq \log(\text{null}(P_{X|Y})+1).\label{khari}
 	\end{align}
 	Moreover, when $\text{rank}(A_{XY})=|\mathcal{Y}|$ and $Y\neq f(X)$, we have $|\mathcal{U}^1|=1$ and for the unique optimizer we have 
 	\begin{align}
 	\mathcal{K}(P_{XY})=H(Y|X)+\sum_i P_{y_i}a_i
 	\end{align}
 	where $A_{XY}\bm{a}=b_{XY}, \bm{a}\geq 0$ and $\sum_{i=1}^{q} P_{y_i}a_i\leq \log(\text{null}(P_{X|Y})+1)-H(Y|X)$.
 \end{theorem}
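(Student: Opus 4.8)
The plan is to show that every optimizer $U\in\mathcal{U}^1$ corresponds to a feasible point of the stated linear program, with its entropy $H(U)$ equal to $H(Y|X)$ plus the LP objective; all four displays then follow by ranging over the feasible set. First I would establish the master identity $H(U)=H(Y|X)+H(U|Y)$, which is immediate from $I(U;Y)=H(U)-H(U|Y)$ together with Lemma~\ref{mohem}, giving $I(U;Y)=H(Y|X)$ for any optimizer. Setting $a_i\triangleq H(U\mid Y=y_i)\geq 0$ so that $\sum_{i=1}^{q}P_{y_i}a_i=H(U|Y)$, this reads $H(U)=H(Y|X)+\sum_{i=1}^{q}P_{y_i}a_i$. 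The theorem then reduces to the claim that this vector $\bm a$ satisfies $A_{XY}\bm a=b_{XY}$, $\bm a\geq 0$.

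The heart of the argument is verifying feasibility coordinate by coordinate. For fixed $x_j$ the $j$-th entry of $A_{XY}\bm a$ is $\sum_i(P_{y_i}-P_{y_i|x_j})a_i$, and I would identify the two sums separately. The Markov condition \eqref{3} gives $H(U\mid Y=y_i,X=x_j)=H(U\mid Y=y_i)=a_i$, hence $\sum_i P_{y_i|x_j}a_i=H(U\mid Y,X=x_j)$, while the master identity gives $\sum_i P_{y_i}a_i=H(U)-H(Y|X)$. To evaluate $H(U\mid Y,X=x_j)$ I would apply the key equation \eqref{key} conditioned on $X=x_j$: since \eqref{2} forces $H(Y\mid U,X=x_j)=0$, one gets $I(Y;U\mid X=x_j)=H(Y\mid X=x_j)$, and expanding the same quantity as $H(U\mid X=x_j)-H(U\mid Y,X=x_j)$ yields $H(U\mid Y,X=x_j)=H(U\mid X=x_j)-H(Y\mid X=x_j)$. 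Finally \eqref{4} gives $H(U\mid X=x_j)=H(U)$. Substituting, the $j$-th coordinate collapses to $[H(U)-H(Y|X)]-[H(U)-H(Y|x_j)]=H(Y|x_j)-H(Y|X)$, exactly the $j$-th entry of $b_{XY}$.

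With feasibility established the bounds are book-keeping. Every $U\in\mathcal{U}^1$ yields a feasible $\bm a$ with objective $H(U)-H(Y|X)$, so maximizing gives \eqref{tala} and minimizing gives $\mathcal{K}(P_{XY})=\min_U H(U)\geq H(Y|X)+\min_{\bm a}\sum_i P_{y_i}a_i$, which is \eqref{lower}; the middle inequality $\mathcal{K}(P_{XY})\leq H(U)$ holds by definition. For \eqref{koontala} I would invoke Lemma~\ref{choon1}: the minimizer $U^\ast$ realizing $\mathcal{K}(P_{XY})\leq\log(\text{null}(P_{X|Y})+1)$ produces an $\bm a^\ast$ whose objective equals $\mathcal{K}(P_{XY})-H(Y|X)\leq\log(\text{null}(P_{X|Y})+1)-H(Y|X)$, so $\bm a^\ast$ lies in the restricted feasible region and $\mathcal{K}(P_{XY})-H(Y|X)$ is bounded by its maximum; \eqref{khari} then follows from the extra constraint itself. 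For the last statement, $\text{rank}(A_{XY})=|\mathcal{Y}|$ means full column rank, so $A_{XY}\bm a=b_{XY}$ has at most one solution; since $P_{XY}\in\hat{\mathcal{P}}_{XY}$ guarantees $\mathcal{U}^1\neq\emptyset$, there is exactly one feasible $\bm a$, forcing the common value $H(U)=H(Y|X)+\sum_i P_{y_i}a_i$ for all $U\in\mathcal{U}^1$.

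I expect the main obstacle to be the conditional application of the key equation in the feasibility step: one must justify that the three global constraints \eqref{2}, \eqref{3}, \eqref{4} descend simultaneously to their per-letter forms ($H(Y\mid U,X=x_j)=0$, $P_{U\mid Y,X=x_j}=P_{U\mid Y}$, and $H(U\mid X=x_j)=H(U)$) and then recombine without sign or conditioning errors. A secondary subtlety is the claim $|\mathcal{U}^1|=1$: full column rank pins down $\bm a$ and hence every relevant conditional entropy, but promoting this to uniqueness of the random variable relies on the cardinality reduction $|\mathcal{U}|\leq\text{null}(P_{X|Y})+1$ from Lemma~\ref{choon} together with $Y\neq f(X)$ to exclude degenerate mechanisms.
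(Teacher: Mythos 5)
Your proposal is correct and takes essentially the same route as the paper: from conditions \eqref{2}, \eqref{3}, \eqref{4} you derive, for each $x_j$, exactly the paper's linear system \eqref{jaleb} with $a_i=H(U|Y=y_i)$ (the paper organizes this as a two-way chain-rule expansion of $H(Y,U|X=x)$, you as a two-way expansion of $I(Y;U|X=x_j)$ — the same identity), and the four bounds, the restricted maximum via Lemma~\ref{choon1}, and the full-column-rank uniqueness argument then follow just as in the paper's proof.
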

\begin{proof}
	Let $U \in \mathcal{U}^1$. In the following we expand $H(Y,U|X=x)$ using chain rule:
	\begin{align}
	H(Y,U|X=x)&=H(U|X=x)+H(Y|U,X=x)\nonumber\\&\stackrel{(a)}{=}H(U),\label{halal}
	\end{align}   
	where in (a) we used 
	\begin{align*}
	I(X;U)=0\rightarrow H(U)=H(U|X)=H(U|X=x),
	\end{align*}
	and 
	\begin{align*}
	H(Y|X,U)=0 \rightarrow H(Y|U,X=x)=0.
	\end{align*}
	Furthermore,
	\begin{align}
	H(Y,U|X=x)&=H(Y|X=x)+H(U|Y,X=x)\nonumber\\&=\!\! H(Y|X\!=\!x)\!+\!\!\!\sum_y\!\! P_{y|x}H(U|Y\!=\!y,\!X\!=\!x)\nonumber\\&\stackrel{(b)}{=}\!\! H(Y|X\!=\!x)\!+\!\sum_y\!\! P_{y|x}H(U|Y\!=\!y),\label{haroom}
	\end{align}
	where (b) follows by the Markov chain $X-Y-U$. Combining \eqref{halal} and \eqref{haroom}, for any $x\in\mathcal{X}$ we have
	\begin{align}\label{really}
	H(U)= H(Y|X\!=\!x)\!+\!\sum_y\! P_{y|x}H(U|Y\!=\!y).
	\end{align}
	Furthermore, by using Lemma \ref{mohem}, $U$ is an optimizer of both $g_{0}(P_{XY})$ and $h_{0}(P_{XY})$. Thus, 
	\begin{align}\label{shobhe}
	H(U)\!=\!H(Y|X)\!+\!U(U|Y)\!=\!H(Y|X)\!+\!\!\!\sum_y\! P_yH(U|Y=y)
	\end{align} 
	Combining \eqref{shobhe} and \eqref{really} we have
	\begin{align}
	&H(Y|X)\!+\!\!\!\sum_y\! P_yH(U|Y=y)=\\&H(Y|X\!=\!x)\!+\!\sum_y\! P_{y|x}H(U|Y\!=\!y)
	\end{align}
	Thus, for any $x\in \mathcal{X}$
	\begin{align}\label{jaleb}
	\sum_y (P_{y|x}-P_y)H(U|Y\!=\!y)=H(Y|X)-H(Y|X=x).
	\end{align}
	For any optimizer $U \in \mathcal{U}^1$, \eqref{jaleb} holds, hence, if we take maximum over $H(U|Y=y)$ and by letting $a_i=H(U|Y=y_i)$, we obtain
	\begin{align}
	H(U)\leq \max H(U)=H(Y|X)+\max \sum_i P_{y_i}a_i
	\end{align}
	subject to
	\begin{align}
	A_{XY}\bm{a}=b_{XY},\ \bm{a}\geq 0.
	\end{align}
	To prove \eqref{koontala}, note that since we consider all optimizers of $g_0(P_{XY})$, there exist at least one $U^*$ that satisfies $|\mathcal{U}^*|\leq \text{null}(P_{X|Y})+1$ which results $H(U^*)\leq \log(\text{null}(P_{X|Y})+1)$. Since \eqref{jaleb} must hold for any optimizer, $U^*$ satisfies the inequality in \eqref{koontala}. Moreover, \eqref{khari} follows by Lemma \ref{choon1} and the constraint we added to have $H(U)\leq \log(\text{null}(P_{X|Y})+1)$. To prove the last statement note that when $Y=f(X)$, using the key equation in \eqref{key}, the upper bound on $h_0(P_{XY})$ equals zero, which results in $h_0(P_{XY})=g_0(P_{XY})=0$. Furthermore, as shown in \cite[Theorem 5]{kostala}, when $Y\neq f(X)$ we have $h_0(P_{XY})>0$, thus, the system of linear equations in \eqref{jaleb} has at least one solution. However, due to $\text{rank}(A_{XY})=|\mathcal{Y}|$ it has a unique solution, since number of independent equations is at least number of variables. Finally, to prove the lower bound in \eqref{lower}, note that for any $U$ that satisfies \eqref{2}, \eqref{3}, and \eqref{4}, \eqref{jaleb} holds. By taking minimum over $H(U)=H(Y|X)+\sum_i P_{y_i}H(U|y_i)$ subject to \eqref{jaleb}. Moreover, the inequality holds for $U^*$ that achieves the minimum entropy.  
\end{proof}
\begin{remark}\label{ant}
	As shown in \cite{borz}, $g_0(P_{XY})$ can be obtained by solving a linear program in which the size of the matrix in the system of linear equations is at most $|\mathcal{Y}|\times\binom{|\mathcal{Y}|}{\text{rank}(P_{X|Y})}$ with at most $\binom{|\mathcal{Y}|}{\text{rank}(P_{X|Y})}$ variables. By solving the linear program as proposed in \cite{borz} we can find the exact value of $\mathcal{K}(P_{XY})$ and the joint distribution $P_{U|YX}$ that achieves it. The complexity of the linear program in \cite{borz} can grow faster than exponential functions with respect to $|\mathcal{Y}|$, however the complexity of our proposed method grows linearly with $|\mathcal{Y}|$. Thus, our proposed upper bound has less complexity compared to the solution in \cite{borz}. Furthermore, in a special case where $\text{rank}(A_{XY})=|\mathcal{Y}|$ we can find the exact value of $\mathcal{K}(P_{XY})$ using our method. One necessary condition for $\text{rank}(A_{XY})=|\mathcal{Y}|$ is to have $|\mathcal{X}|\geq |\mathcal{Y}|+1$, since the summation of rows in $A_{XY}$ equals to zero.   
\end{remark}
\begin{remark}
	The optimizer of $g_0(P_{XY})$ does not help us to build $\cal C$, since the constraint $H(Y|U,X)=0$ does not hold in general, however, the optimizer of $h_0(P_{XY})$ satisfies it. Hence, we consider the cases where the equality $g_0(P_{XY})=h_0(P_{XY})$ holds. In this case, $H(Y|U,X)=0$ and we have tighter bounds on $H(U)$ compared to \cite{kostala}.  
\end{remark}
\begin{remark}
	The upper bound in \eqref{tala} holds for any $U\in\mathcal{U}^1$, however, the upper bound in \eqref{tala} asserts that there exists $U\in\mathcal{U}^1$ such that the bound holds. The lower bound in \eqref{lower} can be used to find lower bound on $\mathbb{L}(P_{XY},M)$. Similar to Remark \ref{ant}, the linear program obtaining the lower bound in \eqref{lower} has less complexity than \cite{borz}. 
\end{remark}
Next, we obtain lower and upper bounds on $\mathbb{L}(P_{XY},M)$.
\begin{theorem}\label{loo}
	For the pair of RVs $(X,Y)$ let $P_{XY}\in\hat{\mathcal{P}}_{XY}$ and the shared secret key size be $|\mathcal{X}|$, i.e., $M=|\mathcal{X}|$. Furthermore, let $q=|\mathcal{Y}|$ and $\beta=\log(\text{null}(P_{X|Y})+1)$. Then, we have 
	\begin{align}
	&\mathbb{L}(P_{XY},|\mathcal{X}|)
		\leq \mathcal{K}(P_{XY})+1+\ceil{\log(|\mathcal{X}|)} \label{log}\\
	&\leq H(Y|X)\!\!+\!\!\!\!\!\!\!\!\!\!\!\!\!\!\!\!\!\!\!\!\max_{\begin{array}{c}
		\substack{a_i:A_{XY}\bm{a}=b_{XY},\bm{a}\geq 0,\\
			\sum_{i=1}^{q}\! P_{y_i}a_i\leq \beta-H(Y|X)} \end{array} }\!\!\sum_{i=1}^{q} \!\!P_{y_i}a_i\!+\!1\!+\!\ceil{\log(|\mathcal{X}|)}\label{ass}\\&\leq \beta+1+\!\ceil{\log(|\mathcal{X}|)}\label{mass}
	\end{align}
	For any $P_{XY}$ we have
	\begin{align}
	&\mathbb{L}(P_{XY},|\mathcal{X}|)\leq \min_{\begin{array}{c} 
		\substack{P_{U|Y,X}: I(U;X)=0,\\ H(Y|X,U)=0}
		\end{array}}H(U)+1+\ceil{\log(|\mathcal{X}|)}\label{kos2}\\
	&\leq 1+\min(\sum_x H(Y|X=x),\ceil{\log(|\mathcal{X}|(|\mathcal{Y}|\!-\!1)\!+\!1)\!-\!1})\nonumber\\&+\ceil{\log(|\mathcal{X}|)}\label{kos1},
	\end{align}
	and if $X=f(Y)$,
	\begin{align}\label{govad}
	\mathbb{L}(P_{XY},|\mathcal{X}|)\leq \ceil{\log(|\mathcal{Y}|\!-\!|\mathcal{X}|\!+\!1)\!}\!+\!\ceil{\log(|\mathcal{X}|)}.
	\end{align} 
	Finally, for any $P_{XY}$ with $|\mathcal{Y}|\leq |\mathcal{X}|$ we have
	\begin{align}
	\mathbb{L}(P_{XY},|\mathcal{Y}|)\leq \ceil{\log{|\mathcal{Y}|}}.\label{kos3} 
	\end{align}
\end{theorem}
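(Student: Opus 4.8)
The plan is to establish \eqref{kos3} by an explicit one-time-pad construction combined with a fixed-length binary encoding, exploiting the fact that the key size equals $|\mathcal{Y}|$. First I would identify the alphabet $\mathcal{Y}$ with the additive group $\mathbb{Z}_{|\mathcal{Y}|}$ and likewise view the key $W$, which is uniform on $\{1,\dots,|\mathcal{Y}|\}$ and independent of $(X,Y)$, as an element of $\mathbb{Z}_{|\mathcal{Y}|}$. The encoder first computes $V = Y \oplus W$, where $\oplus$ denotes addition modulo $|\mathcal{Y}|$, and then outputs the fixed-length binary representation of $V$ using $\ceil{\log|\mathcal{Y}|}$ bits; the decoder, which also knows $W$, reads off $V$ from the codeword and recovers $Y = V \ominus W$. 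Since all codewords have equal length, the resulting code is automatically prefix-free, and the above inversion shows it is lossless.

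The key step is to verify perfect privacy. Because $W$ is uniform on $\mathbb{Z}_{|\mathcal{Y}|}$ and independent of $(X,Y)$, for every fixed pair $(x,y)$ the quantity $V = y \oplus W$ is uniform on $\mathbb{Z}_{|\mathcal{Y}|}$, so that $\mathbb{P}(V = v \mid X=x, Y=y) = 1/|\mathcal{Y}|$ for all $v$. Hence $V$, and therefore the injective fixed-length encoding $\mathcal{C}(Y,W)$ of $V$, is independent of the joint pair $(X,Y)$, which in particular yields $I(\mathcal{C}(Y,W);X)=0$, i.e.\ the code is perfectly private. This is the only point that genuinely requires care: one must check independence from the joint pair $(X,Y)$ rather than merely from $Y$, but this is immediate from $W\perp(X,Y)$ since conditioning on $(x,y)$ leaves $W$ uniform.

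Finally, I would bound the average length. Since every codeword has length exactly $\ceil{\log|\mathcal{Y}|}$, we have $\mathbb{L}(\mathcal{C}(Y,w)) = \ceil{\log|\mathcal{Y}|}$ deterministically for every $w \in \{1,\dots,|\mathcal{Y}|\}$, so $\mathbb{E}(\mathbb{L}(\mathcal{C}(Y,w))) = \ceil{\log|\mathcal{Y}|}$ for all $w$ and the code is $(\ceil{\log|\mathcal{Y}|},|\mathcal{Y}|)$-variable-length. Combining this with losslessness and perfect privacy shows that $(\mathcal{C},\mathcal{D}) \in \mathcal{H}(\ceil{\log|\mathcal{Y}|},|\mathcal{Y}|)$, and taking the infimum in the definition \eqref{main1} gives $\mathbb{L}(P_{XY},|\mathcal{Y}|) \le \ceil{\log|\mathcal{Y}|}$, as claimed. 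I expect no serious obstacle here; indeed the hypothesis $|\mathcal{Y}|\le|\mathcal{X}|$ is not used by the construction itself, but simply places the result within the regime $|\mathcal{X}|\ge|\mathcal{Y}|$ treated in this section, where a key of size $|\mathcal{Y}|$ suffices to mask $Y$ entirely and hence to mask $X$.
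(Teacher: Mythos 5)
Your one-time-pad argument correctly establishes \eqref{kos3}, and it is essentially the paper's own construction for that inequality: mask $Y$ itself with the size-$|\mathcal{Y}|$ key, send a fixed-length encoding of $\tilde{Y} = Y + W \bmod |\mathcal{Y}|$, and check that $\tilde{Y}$ is independent of the pair $(X,Y)$ (the paper verifies this via $I(\tilde{Y};X,Y)=H(\tilde{Y})-H(W)=0$; your conditional-distribution argument is equivalent). Your observation that the hypothesis $|\mathcal{Y}|\le|\mathcal{X}|$ is not used by the construction is also accurate---it only matters when comparing \eqref{kos3} against the other bounds.

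However, the theorem asserts seven bounds and you have proved only the last one, so the proof is substantially incomplete. The central content---the part that is new relative to \cite{kostala}---is \eqref{log} together with its consequences \eqref{ass} and \eqref{mass}, and none of these follow from your construction. They require the hypothesis $P_{XY}\in\hat{\mathcal{P}}_{XY}$ and a \emph{two-part} code with key size $|\mathcal{X}|$: the encoder sends $\mathcal{C}_1$, a $\ceil{\log(|\mathcal{X}|)}$-bit one-time-pad encryption $\tilde{X}=X+W \bmod |\mathcal{X}|$ of the \emph{private} data, followed by $\mathcal{C}_2$, a lossless encoding (at most $\mathcal{K}(P_{XY})+1$ bits on average) of a minimum-entropy optimizer $U^*\in\mathcal{U}^1(P_{XY})$ of $g_0(P_{XY})=h_0(P_{XY})$. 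Decodability uses $H(Y|X,U^*)=0$ (recover $X$ from $\tilde{X}$ and $W$, then $Y$ from $(X,U^*)$), and perfect privacy of $\mathcal{C}=(\mathcal{C}_1,\mathcal{C}_2)$ requires showing $I(U^*,\tilde{X};X)=0$, which rests on $I(U^*;X)=0$, on choosing $W$ independent of $(X,U^*)$, and on the resulting independence of $\tilde{X}$ and $U^*$---a step with no analogue in your argument, since here the message carries information correlated with $Y$ beyond a uniformly scrambled symbol. The chain \eqref{ass}--\eqref{mass} then follows from Theorem \ref{11} and Lemma \ref{choon1}; \eqref{kos2} is the same two-part coding applied to an arbitrary $U$ satisfying $I(U;X)=0$ and $H(Y|X,U)=0$; and \eqref{kos1}, \eqref{govad} are imported from \cite[Theorem 8]{kostala}. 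Without these pieces the theorem is not proved.
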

\begin{figure}[]
	\centering
	\includegraphics[scale = .12]{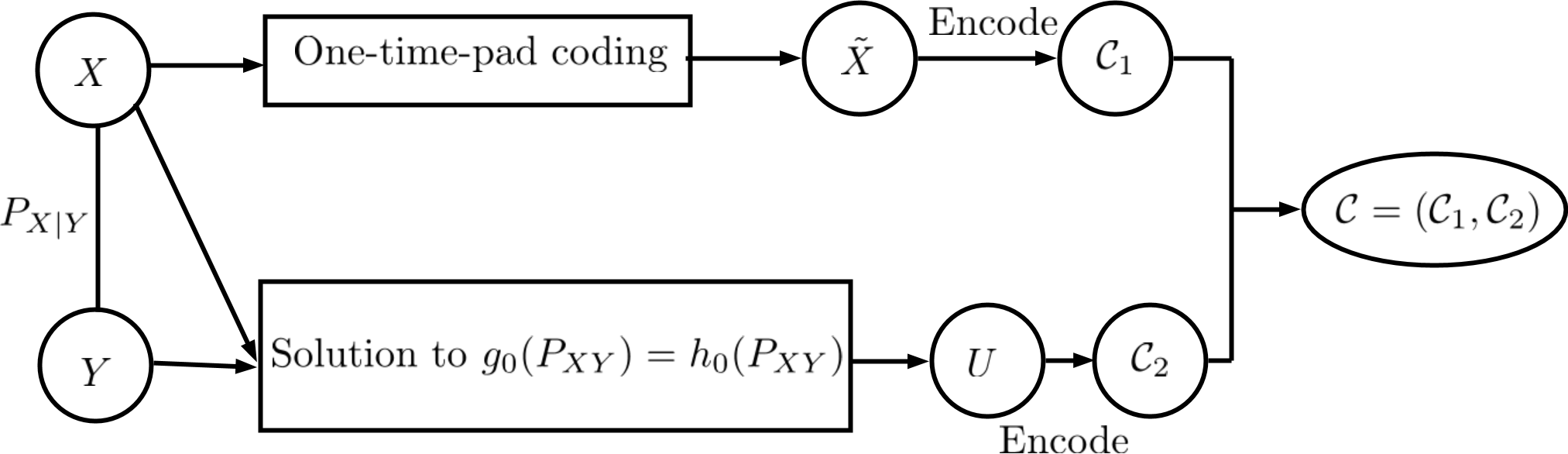}
	\caption{In this work we use two-part construction coding strategy to send codewords over the channels. We hide the information of $X$ using one-time-pad coding and we then use the solution of $g_0(P_{XY})=h_0(P_{XY})$ to construct $U$.} 
	\label{kesh11}
\end{figure}
\begin{figure}[]
	\centering
	\includegraphics[scale = .12]{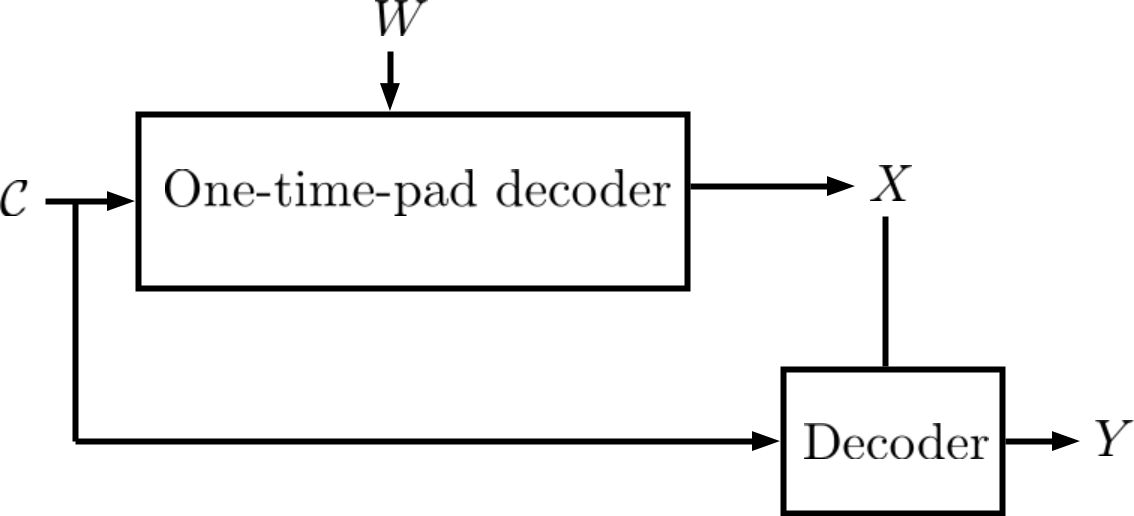}
	\caption{At the receiver side, we first decode $X$ using the shared key $W$, then by using the fact that $U$ satisfies $H(Y|X,U)=0$, we can decode $Y$ based on $X$ and $U$.}
	\label{kesh111}
\end{figure}
\begin{proof}
	All upper bounds can be obtained using two-part construction coding \cite{kostala}. Let $W$ be the shared secret key with key size $M=|\mathcal{X}|$, which is uniformly distributed over $\{1,,\ldots,M\}=\{1,\ldots,|\mathcal{X}|\}$ and independent of $(X,Y)$. As shown in Figure \ref{kesh11}, first, the private data $X$ is encoded using the shared secret key \cite[Lemma~1]{kostala2}. Thus, we have
	\begin{align*}
	\tilde{X}=X+W\ \text{mod}\ |\mathcal{X}|.
	\end{align*}
	Next, we show that $\tilde{X}$ has uniform distribution over $\{1,\ldots,|\mathcal{X}|\}$ and $I(X;\tilde{X})=0$. We have
	\begin{align}\label{s}
	H(\tilde{X}|X)\!=\!H(X\!+\!W|X)\!=\!H(W|X)\!=\!H(W)\!=\! \log(|\mathcal{X}|).
	\end{align}
	Furthermore, $H(\tilde{X}|X)\leq H(\tilde{X})$, and combining it with \eqref{s}, we obtain $H(\tilde{X}|X)= H(\tilde{X})=\log(|\mathcal{X}|)$. For encoding $\tilde{X}$ we use $\ceil{\log(|\mathcal{X}|)}$ bits. Let $\mathcal{C}_1$ denote the encoded $\tilde{X}$. Next, by using $P_{XY}\in\hat{\mathcal{P}}_{XY}$, let $U^*\in\mathcal{U}^1$ be an optimizer of $h_0(P_{XY})=g_0(P_{XY})$ that achieves $\mathcal{K}(P_{XY})$, i.e., has minimum entropy. We encode $U^*$ using any lossless code which uses at most $H(U^*)+1=\mathcal{K}(P_{XY})+1$ bits and let $\mathcal{C}_2$ be the encoded message $U^*$. We send $\mathcal{C}=(\mathcal{C}_1,\mathcal{C}_2)$ over the channel. Thus, \eqref{log} can be obtained. Note that, as shown in Figure \ref{kesh111}, at decoder side, we first decode $X$ using $W$, by adding $|\mathcal{X}|-W$ to $\tilde{X}$, then, $Y$ is decoded using the fact that $U^*$ satisfies $H(Y|U^*,X)=0$. Next, we show that if $W$ is independent of $(X,U^*)$ we get $I(\mathcal{C};X)=0$.
	 We have
	\begin{align*}
	I(\mathcal{C};X)&=I(\mathcal{C}_1,\mathcal{C}_2;X)=I(U,\tilde{X};X)\\&=I(U^*;X)+I(\tilde{X};X|U)=I(\tilde{X};X|U^*)\\&\stackrel{(a)}{=}H(\tilde{X}|U^*)-H(\tilde{X}|X,U^*)\\&=H(\tilde{X}|U^*)-H(X+W|X,U^*)\\&\stackrel{(b)}{=}H(\tilde{X}|U^*)-H(W)\\&\stackrel{(c)}{=}H(\tilde{X})-H(W)=\log(|\mathcal{X}|)\!-\!\log(|\mathcal{X}|)\!=\!0
	\end{align*}
	where (a) follows from $I(U^*;X)=0$; (b) follows since $W$ is independent of $(X,U^*)$; and (c) from the independence of $U^*$ and $\tilde{X}$. The latter follows since we have
	\begin{align*}
	0\leq I(\tilde{X};X|U^*) &= H(\tilde{X}|U^*)-H(W)\\&\stackrel{(i)}{=}H(\tilde{X}|U^*)-H(\tilde{X})\leq 0.
	\end{align*}
	Thus, $\tilde{X}$ and $U^*$ are independent. Step (i) above follows by the fact that $W$ and $\tilde{X}$ are uniformly distributed over $\{1,\ldots,|\mathcal{X}|\}$, i.e., $H(W)=H(\tilde{X})$. As a summary, if we choose $W$ independent of $(X,U^*)$ the leakage to the adversary is zero. Thus, \eqref{log} is achievable. Upper bounds \eqref{ass} and \eqref{mass} can be obtained using the same coding and chain of inequalities stated in Theorem \ref{11}. Upper bounds in \eqref{kos1} and \eqref{govad} have been derived in \cite[Theorem 8]{kostala}. Upper bound in \eqref{kos2} can be obtained using the same coding. Note that the inequality between the upper bounds follows by the fact that the RV $U$ that achieves the upper bound in \eqref{kos1} does not necessarily minimize the entropy $H(U)$, however, it satisfies the constraints $I(X;U)=0$ and $H(Y|X,U)=0$. Finally, to achieve \eqref{kos3}, let the shared key $W$ be independent of $(X,Y)$ and has uniform distribution. We construct $\tilde{Y}$ using one-time pad coding. We have
	\begin{align*}
	\tilde{Y}=Y+W\ \text{mod}\ |\mathcal{Y}|.
	\end{align*}
	Then, $\tilde{Y}$ is encoded using any lossless code which uses at most $\ceil{\log(|\mathcal{Y}|)}$ bits. Latter follows since $\tilde{Y}$ has uniform distribution. It only remains to show that $I(\tilde{Y};X)=0$. We have
	\begin{align*}
	I(\tilde{Y};X,Y)&=I(Y+W;X,Y)\\&=H(Y+W)-H(Y+W|X,Y)\\&=\log(|\mathcal{Y}|)-H(W|X,Y)\\&=\log(|\mathcal{Y}|)-H(W)\\&=\log(|\mathcal{Y}|)-\log(|\mathcal{Y}|)=0.
	\end{align*}
	Hence, $I(\tilde{Y};X)=0$. Furthermore, at decoder side, we can decode $Y$ using $W$.  
\end{proof}
Next we obtain lower bounds on $\mathbb{L}(P_{XY},M)$.
\begin{theorem}
	For any pair of RVs $(X,Y)$ distributed according to $P_{XY}$ and shared key size $M\geq 1$ we have
	\begin{align}\label{converse1}
	\mathbb{L}(P_{XY},M)\geq \max{x}H(Y|X=x),
	\end{align}
	if $X=f(Y)$,
	\begin{align}\label{converse2}
	\mathbb{L}(P_{XY},|\mathcal{X}|)\geq \log(|\mathcal{X}|),
	\end{align}
	and the code $\mathcal{C}$ does not exist when $M< |\mathcal{X}|$ and $X=f(Y)$.
	Furthermore, considering $P_{XY}\in\hat{\mathcal{P}}_{XY}$, if we assume the received code $\mathcal{C}$ satisfies $I(X;\mathcal{C})=0$, $H(Y|X,\mathcal{C})=0$ and $X-Y-\mathcal{C}$ we have
	\begin{align}\label{converse3}
	\mathbb{L}(P_{XY},M)\geq H(Y|X)+\!\!\!\!\!\min_{a_i:A_{XY}\bm{a}=b_{XY},\bm{a}\geq 0}\sum_{i=1}^{q} P_{y_i}a_i.
	\end{align}
\end{theorem}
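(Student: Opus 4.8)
The plan is to prove the three bounds separately, in each case reducing the average length to an entropy quantity via the prefix-free converse $\mathbb{E}[\mathbb{L}(\mathcal{C}(Y,W))]\geq H(\mathcal{C}(Y,W))$, and then exploiting the perfect-privacy constraint \eqref{lash1}. The two building blocks I would establish up front are: (i) for a fixed key $w$ the map $y\mapsto\mathcal{C}(y,w)$ is injective and prefix-free by losslessness, so $\mathbb{E}_{Y\sim P_{Y|X=x}}[\mathbb{L}(\mathcal{C}(Y,w))]\geq H(Y|X=x)$; and (ii) by \eqref{jojo} together with $W\perp(X,Y)$ uniform, the unconditional length satisfies $\mathbb{E}[\mathbb{L}(\mathcal{C}(Y,W))]=\tfrac1M\sum_w\mathbb{E}[\mathbb{L}(\mathcal{C}(Y,w))]\leq\alpha$.

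For \eqref{converse1}, first I would note that since $\mathcal{C}(Y,W)$ is independent of $X$, so is its length, hence $\mathbb{E}[\mathbb{L}(\mathcal{C}(Y,W))\mid X=x]=\mathbb{E}[\mathbb{L}(\mathcal{C}(Y,W))]$ for every $x$. Fixing $x$ and conditioning also on each key value $w$, the source seen by the prefix-free encoder $\mathcal{C}(\cdot,w)$ is $P_{Y|X=x}$, so by (i) the conditional expected length is at least $H(Y|X=x)$, and averaging over $w$ preserves this. Combining with the privacy identity and (ii) gives $\alpha\geq\mathbb{E}[\mathbb{L}(\mathcal{C}(Y,W))]=\mathbb{E}[\mathbb{L}(\mathcal{C}(Y,W))\mid X=x]\geq H(Y|X=x)$ for all $x$; taking the maximum over $x$ and the infimum over admissible codes yields \eqref{converse1}.

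For \eqref{converse2} and the non-existence claim, the key observation is that when $X=f(Y)$ the decoder recovers $Y$, hence $X$, from $(\mathcal{C},W)$, so $X=x(\mathcal{C},W)$ is a deterministic function of the pair. For a fixed codeword $c$ in the support of $\mathcal{C}$, perfect privacy forces $P(X=x\mid\mathcal{C}=c)=P(X=x)>0$ for every $x$, so the map $w\mapsto x(c,w)$ must be surjective onto $\mathcal{X}$; since it ranges over at most $M$ values this requires $M\geq|\mathcal{X}|$, which proves non-existence when $M<|\mathcal{X}|$. When $M=|\mathcal{X}|$ the map is a bijection for each $c$, so conditioned on $X=x_0$ the key $W$ becomes a deterministic function of $\mathcal{C}$, i.e. $H(W\mid X=x_0,\mathcal{C})=0$. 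Using $W\perp X$ (so $H(W\mid X=x_0)=\log|\mathcal{X}|$) and perfect privacy (so $H(\mathcal{C}\mid X=x_0)=H(\mathcal{C})$), I would conclude $\log|\mathcal{X}|=I(W;\mathcal{C}\mid X=x_0)\leq H(\mathcal{C})\leq\mathbb{E}[\mathbb{L}(\mathcal{C})]$, giving \eqref{converse2}.

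Finally, \eqref{converse3} is the most direct: the three hypotheses $I(X;\mathcal{C})=0$, $H(Y|X,\mathcal{C})=0$ and $X-Y-\mathcal{C}$ are exactly conditions \eqref{4}, \eqref{2}, \eqref{3}, so $\mathcal{C}\in\mathcal{U}^1(P_{XY})$; by the definition of $\mathcal{K}(P_{XY})$ together with Lemma \ref{mohem} this gives $H(\mathcal{C})\geq\mathcal{K}(P_{XY})$, while the lower bound \eqref{lower} in Theorem \ref{11} supplies $\mathcal{K}(P_{XY})\geq H(Y|X)+\min_{\bm{a}}\sum_i P_{y_i}a_i$ subject to $A_{XY}\bm{a}=b_{XY},\ \bm{a}\geq 0$. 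Chaining with $\mathbb{E}[\mathbb{L}(\mathcal{C})]\geq H(\mathcal{C})$ finishes the proof. The main obstacle I anticipate is the combinatorial step in \eqref{converse2}: justifying precisely that the fibre map $w\mapsto x(c,w)$ is well defined on all of $\{1,\dots,M\}$ and bijective for every codeword $c$, so that $W$ is recoverable from $\mathcal{C}$ given $X=x_0$, which is where the full-support assumption on $X$ and the interplay of determinism with perfect privacy must be handled carefully.
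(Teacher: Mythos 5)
Your proposal is correct, but it is considerably more self-contained than what the paper actually does: the paper's proof of this theorem is essentially a citation, quoting \eqref{converse1}, \eqref{converse2} and the non-existence claim from \cite[Theorem 9]{kostala}, and deriving only \eqref{converse3} by invoking Theorem \ref{11}. For \eqref{converse3} your route coincides with the paper's: the hypotheses $I(X;\mathcal{C})=0$, $H(Y|X,\mathcal{C})=0$ and $X-Y-\mathcal{C}$ are exactly \eqref{4}, \eqref{2}, \eqref{3}, so $\mathcal{C}\in\mathcal{U}^1(P_{XY})$, and one chains $\mathbb{E}[\mathbb{L}(\mathcal{C})]\geq H(\mathcal{C})\geq \mathcal{K}(P_{XY})\geq H(Y|X)+\min_{\bm{a}}\sum_i P_{y_i}a_i$ via \eqref{lower}; note that Lemma \ref{mohem} is not actually needed for the middle inequality, since the definition of $\mathcal{K}(P_{XY})$ as a minimum over $\mathcal{U}^1$ already yields $H(\mathcal{C})\geq\mathcal{K}(P_{XY})$. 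For \eqref{converse1} and \eqref{converse2} you reconstruct, soundly, the arguments the paper outsources: the per-key Kraft converse together with the observation that perfect privacy \eqref{lash1} makes the codeword length independent of $X$, which is precisely what upgrades the usual averaged bound $H(Y|X)$ to the stronger $\max_x H(Y|X=x)$; and, when $X=f(Y)$, the fibre-counting argument that every codeword $c$ must decode to every $x\in\mathcal{X}$ under some key, forcing $M\geq|\mathcal{X}|$ (hence non-existence for $M<|\mathcal{X}|$) and, at $M=|\mathcal{X}|$, bijectivity of $w\mapsto x(c,w)$, whence $H(W|X=x_0,\mathcal{C})=0$ and $\log|\mathcal{X}|=I(W;\mathcal{C}|X=x_0)\leq H(\mathcal{C})\leq\mathbb{E}[\mathbb{L}(\mathcal{C})]$. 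Your version also correctly handles the details the statement leaves implicit (averaging \eqref{jojo} over $w$, and the reduction to full-support $X$). What your approach buys is a theorem that no longer depends on an external reference; what the paper's citation buys is brevity, at the cost of the reader having to consult \cite{kostala} to see exactly these two arguments.
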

\begin{proof}
	The proof of \eqref{converse1}, \eqref{converse2}, are provided in \cite[Theorem 9]{kostala} and \eqref{converse3} follows from Theorem \ref{11}.
\end{proof}
\begin{remark}
	Note the constraint $X-Y-\mathcal{C}$ in the converse bound \eqref{converse3} can be motivated by considering the cases where the encoder has no direct access to the private data $X$. Furthermore, the constraint $H(Y|X,\mathcal{C})=0$ is stronger compared to $H(Y|X,\mathcal{C},W)=0$ that is used to prove \eqref{converse1}. More specifically, the constraint needed to prove \eqref{converse1} is $H(Y|\mathcal{C},W)=0$ which results in $H(Y|X,\mathcal{C},W)=0$.
\end{remark}
\section{Comparison:}
In this section, we study the obtained bounds in Theorem \ref{loo} and compare them with the existing results in \cite{kostala}.
\subsection*{Case 1: $|\mathcal{Y}|\leq |\mathcal{X}|$}
Clearly, for any $P_{XY}$ with $|\mathcal{Y}|\leq |\mathcal{X}|$ the upper bound obtained in \eqref{kos3} improves the bounds in \eqref{kos1} and \eqref{kos2} where \eqref{kos1} has been derived in \cite{kostala}. Latter follows since in this case we have $\ceil{\log(|\mathcal{Y}|)}\leq \ceil{\log(|\mathcal{X}|)}$.
\subsection*{Case 2: $P_{XY}\in\hat{\mathcal{P}}_{XY}$}
In this part, let $X=f(Y)$ which results in $P_{XY}\in\hat{\mathcal{P}}_{XY}$, see Corollary 1. The upper bound attained by \cite{kostala} is $\ceil{\log(|\mathcal{Y}|-|\mathcal{X}|+1)}+\ceil{\log(|\mathcal{X}|)}$. Furthemore, we can assume that all elements in probability vector $P_{X}$ are non-zero, otherwise we can remove them. the In this case, we have
\begin{align}
\text{null}(P_{X|Y})=|\mathcal{Y}|-|\mathcal{X}|.
\end{align}
This follows since each column in $P_{X|Y}$ contains exactly one non-zero element that equals to $1$. Moreover, all rows of $P_{X|Y}$ are linearly independent since each row contains a non-zero element that the other rows do not have it. Thus, $\text{rank}(P_{X|Y})=|\mathcal{X}|$ which results $\text{null}(P_{X|Y})=|\mathcal{Y}|-|\mathcal{X}|$. Thus, the upper bound in \eqref{mass} can be modified to get the same bound as \eqref{govad}. Furthermore, the upper bounds \eqref{ass} and \eqref{log} attain less quantities compared to \eqref{govad}, i.e., they can improve the bounds in \cite{kostala}. 
Next, in a numerical example we show that the upper bounds \eqref{ass} and \eqref{log} improve \eqref{govad}.
\begin{example}
	Let $P_{X|Y}=\begin{bmatrix}
	1 &1 &1 &0 &0 &0\\
	0 &0 &0 &1 &1 &1
	\end{bmatrix}$ and $P_Y=[\frac{1}{8},\frac{2}{8},\frac{3}{8},\frac{1}{8},\frac{1}{16},\frac{1}{16} ]$. Clearly, in this case $X$ is a deterministic function of $Y$. Using the linear program proposed in \cite{borz}, we obtain a solution as $P_{Y|u_1}=[0.75,0,0,0.25,0,0]$, $P_{Y|u_2}=[0,0.75,0,0.25,0,0]$, $P_{Y|u_3}=[0,0,0.75,0,0.25,0]$, $P_{Y|u_4}=[0,0,0.75,0,0,0.25]$ and $P_U=[\frac{1}{6},\frac{1}{3},\frac{1}{4},\frac{1}{4} ]$ which results $H(U)=1.9591$ bits. We have $\mathcal{K}(P_{XY})+1\leq 2.9591$ and $\ceil{\log(|\mathcal{Y}|-|\mathcal{X}|+1)}=\ceil{\log{5}}=3$. Hence, the upper bound \eqref{log} is strictly lower than \eqref{govad}, i.e., $U^*$ that achieves $g_0(P_{XY})$ has less entropy than the RV $U$ constructed in \cite[Lemma 1]{kostala}.  
\end{example} 
\section{conclusion}
We have studied a compression problem
with a privacy constraint, where the information delivered
over the channel is
independent of $X$. We propose new upper bounds using two-part
construction coding that benefits from the solution of $g_0(P_{XY})=h_0(P_{XY})$ to encode the second part of the code.
For the first part of the code we hide the private data using
one-time pad coding. Furthermore, in case of $|\mathcal{Y}|\geq |\mathcal{X}|$, we propose a new achievable scheme. We have shown that the new upper bounds can improve the existing ones.

\bibliographystyle{IEEEtran}
\bibliography{IEEEabrv,IZS}

\begin{thebibliography}{10}
\providecommand{\url}[1]{#1}
\csname url@samestyle\endcsname
\providecommand{\newblock}{\relax}
\providecommand{\bibinfo}[2]{#2}
\providecommand{\BIBentrySTDinterwordspacing}{\spaceskip=0pt\relax}
\providecommand{\BIBentryALTinterwordstretchfactor}{4}
\providecommand{\BIBentryALTinterwordspacing}{\spaceskip=\fontdimen2\font plus
\BIBentryALTinterwordstretchfactor\fontdimen3\font minus
  \fontdimen4\font\relax}
\providecommand{\BIBforeignlanguage}[2]{{%
\expandafter\ifx\csname l@#1\endcsname\relax
\typeout{** WARNING: IEEEtran.bst: No hyphenation pattern has been}%
\typeout{** loaded for the language `#1'. Using the pattern for}%
\typeout{** the default language instead.}%
\else
\language=\csname l@#1\endcsname
\fi
#2}}
\providecommand{\BIBdecl}{\relax}
\BIBdecl

\bibitem{rassoul1}
B.~Rassouli and D.~G\"{u}nd\"{u}z, ``On perfect privacy and maximal
  correlation,'' \emph{arXiv preprint arXiv:1712.08500}, 2017.

\bibitem{makhdoumi}
A.~Makhdoumi, S.~Salamatian, N.~Fawaz, and M.~M{\'e}dard, ``From the
  information bottleneck to the privacy funnel,'' in \emph{2014 IEEE
  Information Theory Workshop (ITW 2014)}.\hskip 1em plus 0.5em minus
  0.4em\relax IEEE, 2014, pp. 501--505.

\bibitem{tishby}
N.~Tishby, F.~C. Pereira, and W.~Bialek, ``The information bottleneck method,''
  \emph{arXiv preprint physics/0004057}, 2000.

\bibitem{yamamoto}
H.~Yamamoto, ``A source coding problem for sources with additional outputs to
  keep secret from the receiver or wiretappers (corresp.),'' \emph{IEEE
  Transactions on Information Theory}, vol.~29, no.~6, pp. 918--923, 1983.

\bibitem{sankar}
L.~Sankar, S.~R. Rajagopalan, and H.~V. Poor, ``Utility-privacy tradeoffs in
  databases: An information-theoretic approach,'' \emph{IEEE Transactions on
  Information Forensics and Security}, vol.~8, no.~6, pp. 838--852, 2013.

\bibitem{dwork1}
C.~Dwork, F.~McSherry, K.~Nissim, and A.~Smith, ``Calibrating noise to
  sensitivity in private data analysis,'' in \emph{Theory of cryptography
  conference}.\hskip 1em plus 0.5em minus 0.4em\relax Springer, 2006, pp.
  265--284.

\bibitem{dwork2}
C.~Dwork, ``Differential privacy, in automata, languages and programming,''
  \emph{ser. Lecture Notes in Computer Scienc}, vol. 4052, p. 112, 2006.

\bibitem{oech}
Z.~Li, T.~J. Oechtering, and D.~G{\"u}nd{\"u}z, ``Privacy against a hypothesis
  testing adversary,'' \emph{IEEE Transactions on Information Forensics and
  Security}, vol.~14, no.~6, pp. 1567--1581, 2018.

\bibitem{borade}
S.~Borade and L.~Zheng, ``Euclidean information theory,'' in \emph{2008 IEEE
  International Zurich Seminar on Communications}.\hskip 1em plus 0.5em minus
  0.4em\relax IEEE, 2008, pp. 14--17.

\bibitem{huang}
S.-L. Huang and L.~Zheng, ``Linear information coupling problems,'' in
  \emph{2012 IEEE International Symposium on Information Theory
  Proceedings}.\hskip 1em plus 0.5em minus 0.4em\relax IEEE, 2012, pp.
  1029--1033.

\bibitem{huang2}
S.-L. Huang, C.~Suh, and L.~Zheng, ``Euclidean information theory of
  networks,'' \emph{IEEE Transactions on Information Theory}, vol.~61, no.~12,
  pp. 6795--6814, 2015.

\end{thebibliography}


\begin{thebibliography}{}
\providecommand{\url}[1]{#1}
\csname url@samestyle\endcsname
\providecommand{\newblock}{\relax}
\providecommand{\bibinfo}[2]{#2}
\providecommand{\BIBentrySTDinterwordspacing}{\spaceskip=0pt\relax}
\providecommand{\BIBentryALTinterwordstretchfactor}{4}
\providecommand{\BIBentryALTinterwordspacing}{\spaceskip=\fontdimen2\font plus
\BIBentryALTinterwordstretchfactor\fontdimen3\font minus
  \fontdimen4\font\relax}
\providecommand{\BIBforeignlanguage}[2]{{%
\expandafter\ifx\csname l@#1\endcsname\relax
\typeout{** WARNING: IEEEtran.bst: No hyphenation pattern has been}%
\typeout{** loaded for the language `#1'. Using the pattern for}%
\typeout{** the default language instead.}%
\else
\language=\csname l@#1\endcsname
\fi
#2}}
\providecommand{\BIBdecl}{\relax}
\BIBdecl

\end{thebibliography}


\begin{thebibliography}{10}
\providecommand{\url}[1]{#1}
\csname url@samestyle\endcsname
\providecommand{\newblock}{\relax}
\providecommand{\bibinfo}[2]{#2}
\providecommand{\BIBentrySTDinterwordspacing}{\spaceskip=0pt\relax}
\providecommand{\BIBentryALTinterwordstretchfactor}{4}
\providecommand{\BIBentryALTinterwordspacing}{\spaceskip=\fontdimen2\font plus
\BIBentryALTinterwordstretchfactor\fontdimen3\font minus
  \fontdimen4\font\relax}
\providecommand{\BIBforeignlanguage}[2]{{%
\expandafter\ifx\csname l@#1\endcsname\relax
\typeout{** WARNING: IEEEtran.bst: No hyphenation pattern has been}%
\typeout{** loaded for the language `#1'. Using the pattern for}%
\typeout{** the default language instead.}%
\else
\language=\csname l@#1\endcsname
\fi
#2}}
\providecommand{\BIBdecl}{\relax}
\BIBdecl

\bibitem{kostala}
Y.~Y. Shkel, R.~S. Blum, and H.~V. Poor, ``Secrecy by design with applications
  to privacy and compression,'' \emph{IEEE Transactions on Information Theory},
  vol.~67, no.~2, pp. 824--843, 2021.

\bibitem{kostala2}
Y.~Y. Shkel and H.~V. Poor, ``A compression perspective on secrecy measures,''
  \emph{IEEE Journal on Selected Areas in Information Theory}, vol.~2, no.~1,
  pp. 163--176, 2021.

\bibitem{shannon}
C.~E. Shannon, ``Communication theory of secrecy systems,'' \emph{The Bell
  System Technical Journal}, vol.~28, no.~4, pp. 656--715, 1949.

\bibitem{dworkal}
C.~Dwork, A.~Roth \emph{et~al.}, ``The algorithmic foundations of differential
  privacy,'' \emph{Foundations and Trends{\textregistered} in Theoretical
  Computer Science}, vol.~9, no. 3--4, pp. 211--407, 2014.

\bibitem{dwork1}
C.~Dwork, F.~McSherry, K.~Nissim, and A.~Smith, ``Calibrating noise to
  sensitivity in private data analysis,'' in \emph{Theory of cryptography
  conference}.\hskip 1em plus 0.5em minus 0.4em\relax Springer, 2006, pp.
  265--284.

\bibitem{gunduz2010source}
D.~G{\"u}nd{\"u}z, E.~Erkip, and H.~V. Poor, ``Source coding under secrecy
  constraints,'' \emph{Securing Wireless Communications at the Physical Layer},
  pp. 173--199, 2010.

\bibitem{schaefer}
R.~F. Schaefer, H.~Boche, A.~Khisti, and H.~V. Poor, \emph{Information
  theoretic security and privacy of information systems}.\hskip 1em plus 0.5em
  minus 0.4em\relax Cambridge University Press, 2017.

\bibitem{sankar}
L.~Sankar, S.~R. Rajagopalan, and H.~V. Poor, ``Utility-privacy tradeoffs in
  databases: An information-theoretic approach,'' \emph{IEEE Transactions on
  Information Forensics and Security}, vol.~8, no.~6, pp. 838--852, 2013.

\bibitem{yamamoto1988rate}
H.~Yamamoto, ``A rate-distortion problem for a communication system with a
  secondary decoder to be hindered,'' \emph{IEEE Transactions on Information
  Theory}, vol.~34, no.~4, pp. 835--842, 1988.

\bibitem{Calmon2}
H.~{Wang}, L.~{Vo}, F.~P. {Calmon}, M.~{M\'{e}dard}, K.~R. {Duffy}, and
  M.~{Varia}, ``Privacy with estimation guarantees,'' \emph{IEEE Transactions
  on Information Theory}, vol.~65, no.~12, pp. 8025--8042, Dec 2019.

\bibitem{yamamoto}
H.~Yamamoto, ``A source coding problem for sources with additional outputs to
  keep secret from the receiver or wiretappers (corresp.),'' \emph{IEEE
  Transactions on Information Theory}, vol.~29, no.~6, pp. 918--923, 1983.

\bibitem{issa}
I.~{Issa}, S.~{Kamath}, and A.~B. {Wagner}, ``An operational measure of
  information leakage,'' in \emph{2016 Annual Conference on Information Science
  and Systems}, March 2016, pp. 234--239.

\bibitem{makhdoumi}
A.~Makhdoumi, S.~Salamatian, N.~Fawaz, and M.~M{\'e}dard, ``From the
  information bottleneck to the privacy funnel,'' in \emph{2014 IEEE
  Information Theory Workshop}, 2014, pp. 501--505.

\bibitem{borz}
B.~{Rassouli} and D.~{G\"{u}nd\"{u}z}, ``On perfect privacy,'' \emph{IEEE
  Journal on Selected Areas in Information Theory}, vol.~2, no.~1, pp.
  177--191, 2021.

\bibitem{khodam}
A.~Zamani, T.~J. Oechtering, and M.~Skoglund, ``A design framework for strongly
  $\chi^2$-private data disclosure,'' \emph{IEEE Transactions on Information
  Forensics and Security}, vol.~16, pp. 2312--2325, 2021.

\bibitem{Khodam22}
{A. Zamani, T. J. Oechtering, and M. Skoglund}, ``Data disclosure with non-zero
  leakage and non-invertible leakage matrix,'' \emph{IEEE Transactions on
  Information Forensics and Security}, vol.~17, pp. 165--179, 2022.

\bibitem{calmon4}
F.~P. {Calmon}, A.~{Makhdoumi}, M.~{Medard}, M.~{Varia}, M.~{Christiansen}, and
  K.~R. {Duffy}, ``Principal inertia components and applications,'' \emph{IEEE
  Transactions on Information Theory}, vol.~63, no.~8, pp. 5011--5038, Aug
  2017.

\bibitem{issajoon}
I.~Issa, A.~B. Wagner, and S.~Kamath, ``An operational approach to information
  leakage,'' \emph{IEEE Transactions on Information Theory}, vol.~66, no.~3,
  pp. 1625--1657, 2020.

\bibitem{asoo}
S.~Asoodeh, M.~Diaz, F.~Alajaji, and T.~Linder, ``Estimation efficiency under
  privacy constraints,'' \emph{IEEE Transactions on Information Theory},
  vol.~65, no.~3, pp. 1512--1534, 2019.

\bibitem{Total}
B.~Rassouli and D.~{G\"{u}nd\"{u}z}, ``Optimal utility-privacy trade-off with
  total variation distance as a privacy measure,'' \emph{IEEE Transactions on
  Information Forensics and Security}, vol.~15, pp. 594--603, 2020.

\bibitem{issa2}
I.~Issa, S.~Kamath, and A.~B. Wagner, ``Maximal leakage minimization for the
  shannon cipher system,'' in \emph{2016 IEEE International Symposium on
  Information Theory}, 2016, pp. 520--524.

\bibitem{king1}
A.~Zamani, T.~J. Oechtering, and M.~Skoglund, ``Bounds for privacy-utility
  trade-off with non-zero leakage,'' in \emph{2022 IEEE International Symposium
  on Information Theory (ISIT)}, 2022, pp. 620--625.

\bibitem{king2}
------, ``Bounds for privacy-utility trade-off with per-letter privacy
  constraints and non-zero leakage,'' in \emph{2022 IEEE Information Theory
  Workshop (ITW)}, 2022, pp. 13--18.

\bibitem{kosenaz}
E.~Erdemir, P.~L. Dragotti, and D.~G{\"u}nd{\"u}z, ``Active privacy-utility
  trade-off against inference in time-series data sharing,'' \emph{arXiv
  preprint arXiv:2202.05833}, 2022.

\bibitem{9457633}
T.-Y. Liu and I.-H. Wang, ``Privacy-utility tradeoff with nonspecific tasks:
  Robust privatization and minimum leakage,'' in \emph{2020 IEEE Information
  Theory Workshop (ITW)}, 2021, pp. 1--5.

\bibitem{asoodeh1}
S.~Asoodeh, M.~Diaz, F.~Alajaji, and T.~Linder, ``Information extraction under
  privacy constraints,'' \emph{Information}, vol.~7, no.~1, p.~15, 2016.

\bibitem{bassi}
G.~Bassi, M.~Skoglund, and P.~Piantanida, ``Lossy communication subject to
  statistical parameter privacy,'' in \emph{2018 IEEE International Symposium
  on Information Theory (ISIT)}, 2018, pp. 1031--1035.

\bibitem{king3}
A.~Zamani, T.~J. Oechtering, and M.~Skoglund, ``On the privacy-utility
  trade-off with and without direct access to the private data,'' \emph{arXiv
  preprint arXiv:2212.12475}, 2022.

\bibitem{deniz3}
B.~{Rassouli}, F.~E. {Rosas}, and D.~{G\"{u}nd\"{u}z}, ``Data disclosure under
  perfect sample privacy,'' \emph{IEEE Transactions on Information Forensics
  and Security}, pp. 1--1, 2019.

\bibitem{wyner}
A.~D. {Wyner}, ``The wire-tap channel,'' \emph{The Bell System Technical
  Journal}, vol.~54, no.~8, pp. 1355--1387, 1975.

\bibitem{gacs1973common}
P.~G{\'a}cs and J.~K{\"o}rner, ``Common information is far less than mutual
  information,'' \emph{Problems of Control and Information Theory}, vol.~2,
  no.~2, pp. 149--162, 1973.

\bibitem{7888175}
Y.~O. {Basciftci}, Y.~{Wang}, and P.~{Ishwar}, ``On privacy-utility tradeoffs
  for constrained data release mechanisms,'' in \emph{2016 Information Theory
  and Applications Workshop}, Jan 2016, pp. 1--6.

\bibitem{kosnane}
C.~T. Li and A.~El~Gamal, ``Strong functional representation lemma and
  applications to coding theorems,'' \emph{IEEE Transactions on Information
  Theory}, vol.~64, no.~11, pp. 6967--6978, 2018.

\end{thebibliography}


\begin{thebibliography}{10}
\providecommand{\url}[1]{#1}
\csname url@samestyle\endcsname
\providecommand{\newblock}{\relax}
\providecommand{\bibinfo}[2]{#2}
\providecommand{\BIBentrySTDinterwordspacing}{\spaceskip=0pt\relax}
\providecommand{\BIBentryALTinterwordstretchfactor}{4}
\providecommand{\BIBentryALTinterwordspacing}{\spaceskip=\fontdimen2\font plus
\BIBentryALTinterwordstretchfactor\fontdimen3\font minus
  \fontdimen4\font\relax}
\providecommand{\BIBforeignlanguage}[2]{{%
\expandafter\ifx\csname l@#1\endcsname\relax
\typeout{** WARNING: IEEEtran.bst: No hyphenation pattern has been}%
\typeout{** loaded for the language `#1'. Using the pattern for}%
\typeout{** the default language instead.}%
\else
\language=\csname l@#1\endcsname
\fi
#2}}
\providecommand{\BIBdecl}{\relax}
\BIBdecl

\bibitem{king1}
A.~Zamani, T.~J. Oechtering, and M.~Skoglund, ``Bounds for privacy-utility
  trade-off with non-zero leakage,'' in \emph{2022 IEEE International Symposium
  on Information Theory (ISIT)}, 2022, pp. 620--625.

\bibitem{kostala}
Y.~Y. Shkel, R.~S. Blum, and H.~V. Poor, ``Secrecy by design with applications
  to privacy and compression,'' \emph{IEEE Transactions on Information Theory},
  vol.~67, no.~2, pp. 824--843, 2021.

\bibitem{shannon}
C.~E. Shannon, ``Communication theory of secrecy systems,'' \emph{The Bell
  System Technical Journal}, vol.~28, no.~4, pp. 656--715, 1949.

\bibitem{denizjadid2}
D.~G{\"u}nd{\"u}z, E.~Erkip, and H.~V. Poor, ``Lossless compression with
  security constraints,'' in \emph{2008 IEEE International Symposium on
  Information Theory}, 2008, pp. 111--115.

\bibitem{gunduz2010source}
------, ``Source coding under secrecy constraints,'' \emph{Securing Wireless
  Communications at the Physical Layer}, pp. 173--199, 2010.

\bibitem{schaefer}
R.~F. Schaefer, H.~Boche, A.~Khisti, and H.~V. Poor, \emph{Information
  theoretic security and privacy of information systems}.\hskip 1em plus 0.5em
  minus 0.4em\relax Cambridge University Press, 2017.

\bibitem{sankar}
L.~Sankar, S.~R. Rajagopalan, and H.~V. Poor, ``Utility-privacy tradeoffs in
  databases: An information-theoretic approach,'' \emph{IEEE Transactions on
  Information Forensics and Security}, vol.~8, no.~6, pp. 838--852, 2013.

\bibitem{yamamoto1988rate}
H.~Yamamoto, ``A rate-distortion problem for a communication system with a
  secondary decoder to be hindered,'' \emph{IEEE Transactions on Information
  Theory}, vol.~34, no.~4, pp. 835--842, 1988.

\bibitem{yeye}
D.~G{\"u}nd{\"u}z, E.~Erkip, and H.~V. Poor, ``Secure lossless compression with
  side information,'' in \emph{2008 IEEE Information Theory Workshop}, 2008,
  pp. 169--173.

\bibitem{yaya}
D.~Gunduz, E.~Erkip, and H.~V. Poor, ``Lossless compression with security
  constraints,'' in \emph{2008 IEEE International Symposium on Information
  Theory}, 2008, pp. 111--115.

\bibitem{makhdoumi}
A.~Makhdoumi, S.~Salamatian, N.~Fawaz, and M.~M{\'e}dard, ``From the
  information bottleneck to the privacy funnel,'' in \emph{2014 IEEE
  Information Theory Workshop}, 2014, pp. 501--505.

\bibitem{cufff}
C.~Schieler and P.~Cuff, ``Rate-distortion theory for secrecy systems,''
  \emph{IEEE Transactions on Information Theory}, vol.~60, no.~12, pp.
  7584--7605, 2014.

\bibitem{borz}
B.~{Rassouli} and D.~{G\"{u}nd\"{u}z}, ``On perfect privacy,'' \emph{IEEE
  Journal on Selected Areas in Information Theory}, vol.~2, no.~1, pp.
  177--191, 2021.

\bibitem{khodam}
A.~Zamani, T.~J. Oechtering, and M.~Skoglund, ``A design framework for strongly
  $\chi^2$-private data disclosure,'' \emph{IEEE Transactions on Information
  Forensics and Security}, vol.~16, pp. 2312--2325, 2021.

\bibitem{Khodam22}
{A. Zamani, T. J. Oechtering, and M. Skoglund}, ``Data disclosure with non-zero
  leakage and non-invertible leakage matrix,'' \emph{IEEE Transactions on
  Information Forensics and Security}, vol.~17, pp. 165--179, 2022.

\bibitem{kostala2}
Y.~Y. Shkel and H.~V. Poor, ``A compression perspective on secrecy measures,''
  \emph{IEEE Journal on Selected Areas in Information Theory}, vol.~2, no.~1,
  pp. 163--176, 2021.

\bibitem{calmon4}
F.~P. {Calmon}, A.~{Makhdoumi}, M.~{Medard}, M.~{Varia}, M.~{Christiansen}, and
  K.~R. {Duffy}, ``Principal inertia components and applications,'' \emph{IEEE
  Transactions on Information Theory}, vol.~63, no.~8, pp. 5011--5038, Aug
  2017.

\bibitem{asoo}
S.~Asoodeh, M.~Diaz, F.~Alajaji, and T.~Linder, ``Estimation efficiency under
  privacy constraints,'' \emph{IEEE Transactions on Information Theory},
  vol.~65, no.~3, pp. 1512--1534, 2019.

\bibitem{issa2}
I.~Issa, S.~Kamath, and A.~B. Wagner, ``Maximal leakage minimization for the
  shannon cipher system,'' in \emph{2016 IEEE International Symposium on
  Information Theory}, 2016, pp. 520--524.

\bibitem{king2}
A.~Zamani, T.~J. Oechtering, and M.~Skoglund, ``Bounds for privacy-utility
  trade-off with per-letter privacy constraints and non-zero leakage,'' in
  \emph{2022 IEEE Information Theory Workshop (ITW)}, 2022, pp. 13--18.

\bibitem{zamani2023cache}
A.~Zamani, T.~J. Oechtering, D.~G{\"u}nd{\"u}z, and M.~Skoglund, ``Cache-aided
  private variable-length coding with zero and non-zero leakage,'' \emph{arXiv
  preprint arXiv:2306.13184}, 2023.

\bibitem{asoodeh1}
S.~Asoodeh, M.~Diaz, F.~Alajaji, and T.~Linder, ``Information extraction under
  privacy constraints,'' \emph{Information}, vol.~7, no.~1, p.~15, 2016.

\bibitem{king3}
A.~Zamani, T.~J. Oechtering, and M.~Skoglund, ``On the privacy-utility
  trade-off with and without direct access to the private data,'' \emph{arXiv
  preprint arXiv:2212.12475}, 2022.

\bibitem{deniz3}
B.~{Rassouli}, F.~E. {Rosas}, and D.~{G\"{u}nd\"{u}z}, ``Data disclosure under
  perfect sample privacy,'' \emph{IEEE Transactions on Information Forensics
  and Security}, pp. 1--1, 2019.

\bibitem{maddah1}
M.~A. Maddah-Ali and U.~Niesen, ``Fundamental limits of caching,'' \emph{IEEE
  Transactions on Information Theory}, vol.~60, no.~5, pp. 2856--2867, 2014.

\end{thebibliography}


\begin{thebibliography}{10}
\providecommand{\url}[1]{#1}
\csname url@samestyle\endcsname
\providecommand{\newblock}{\relax}
\providecommand{\bibinfo}[2]{#2}
\providecommand{\BIBentrySTDinterwordspacing}{\spaceskip=0pt\relax}
\providecommand{\BIBentryALTinterwordstretchfactor}{4}
\providecommand{\BIBentryALTinterwordspacing}{\spaceskip=\fontdimen2\font plus
\BIBentryALTinterwordstretchfactor\fontdimen3\font minus
  \fontdimen4\font\relax}
\providecommand{\BIBforeignlanguage}[2]{{%
\expandafter\ifx\csname l@#1\endcsname\relax
\typeout{** WARNING: IEEEtran.bst: No hyphenation pattern has been}%
\typeout{** loaded for the language `#1'. Using the pattern for}%
\typeout{** the default language instead.}%
\else
\language=\csname l@#1\endcsname
\fi
#2}}
\providecommand{\BIBdecl}{\relax}
\BIBdecl

\bibitem{Calmon2}
H.~{Wang}, L.~{Vo}, F.~P. {Calmon}, M.~{M\'{e}dard}, K.~R. {Duffy}, and
  M.~{Varia}, ``Privacy with estimation guarantees,'' \emph{IEEE Transactions
  on Information Theory}, vol.~65, no.~12, pp. 8025--8042, Dec 2019.

\bibitem{yamamoto}
H.~Yamamoto, ``A source coding problem for sources with additional outputs to
  keep secret from the receiver or wiretappers (corresp.),'' \emph{IEEE
  Transactions on Information Theory}, vol.~29, no.~6, pp. 918--923, 1983.

\bibitem{sankar}
L.~Sankar, S.~R. Rajagopalan, and H.~V. Poor, ``Utility-privacy tradeoffs in
  databases: An information-theoretic approach,'' \emph{IEEE Transactions on
  Information Forensics and Security}, vol.~8, no.~6, pp. 838--852, 2013.

\bibitem{borz}
B.~{Rassouli} and D.~{G\"{u}nd\"{u}z}, ``On perfect privacy,'' \emph{IEEE
  Journal on Selected Areas in Information Theory}, vol.~2, no.~1, pp.
  177--191, 2021.

\bibitem{gun}
S.~{Sreekumar} and D.~{G\"{u}nd\"{u}z}, ``Optimal privacy-utility trade-off
  under a rate constraint,'' in \emph{2019 IEEE International Symposium on
  Information Theory}, July 2019, pp. 2159--2163.

\bibitem{khodam}
A.~Zamani, T.~J. Oechtering, and M.~Skoglund, ``A design framework for strongly
  $\chi^2$-private data disclosure,'' \emph{IEEE Transactions on Information
  Forensics and Security}, vol.~16, pp. 2312--2325, 2021.

\bibitem{Khodam22}
{A. Zamani, T. J. Oechtering, and M. Skoglund}, ``Data disclosure with non-zero
  leakage and non-invertible leakage matrix,'' \emph{IEEE Transactions on
  Information Forensics and Security}, vol.~17, pp. 165--179, 2022.

\bibitem{kostala}
Y.~Y. Shkel, R.~S. Blum, and H.~V. Poor, ``Secrecy by design with applications
  to privacy and compression,'' \emph{IEEE Transactions on Information Theory},
  vol.~67, no.~2, pp. 824--843, 2021.

\bibitem{king1}
A.~Zamani, T.~J. Oechtering, and M.~Skoglund, ``Bounds for privacy-utility
  trade-off with non-zero leakage,'' in \emph{2022 IEEE International Symposium
  on Information Theory (ISIT)}, 2022, pp. 620--625.

\bibitem{issa}
I.~{Issa}, S.~{Kamath}, and A.~B. {Wagner}, ``An operational measure of
  information leakage,'' in \emph{2016 Annual Conference on Information Science
  and Systems}, March 2016, pp. 234--239.

\bibitem{makhdoumi}
A.~Makhdoumi, S.~Salamatian, N.~Fawaz, and M.~M{\'e}dard, ``From the
  information bottleneck to the privacy funnel,'' in \emph{2014 IEEE
  Information Theory Workshop}, 2014, pp. 501--505.

\bibitem{dwork1}
C.~Dwork, F.~McSherry, K.~Nissim, and A.~Smith, ``Calibrating noise to
  sensitivity in private data analysis,'' in \emph{Theory of cryptography
  conference}.\hskip 1em plus 0.5em minus 0.4em\relax Springer, 2006, pp.
  265--284.

\bibitem{calmon4}
F.~P. {Calmon}, A.~{Makhdoumi}, M.~{Medard}, M.~{Varia}, M.~{Christiansen}, and
  K.~R. {Duffy}, ``Principal inertia components and applications,'' \emph{IEEE
  Transactions on Information Theory}, vol.~63, no.~8, pp. 5011--5038, Aug
  2017.

\bibitem{issajoon}
I.~Issa, A.~B. Wagner, and S.~Kamath, ``An operational approach to information
  leakage,'' \emph{IEEE Transactions on Information Theory}, vol.~66, no.~3,
  pp. 1625--1657, 2020.

\bibitem{asoo}
S.~Asoodeh, M.~Diaz, F.~Alajaji, and T.~Linder, ``Estimation efficiency under
  privacy constraints,'' \emph{IEEE Transactions on Information Theory},
  vol.~65, no.~3, pp. 1512--1534, 2019.

\bibitem{Total}
B.~Rassouli and D.~{G\"{u}nd\"{u}z}, ``Optimal utility-privacy trade-off with
  total variation distance as a privacy measure,'' \emph{IEEE Transactions on
  Information Forensics and Security}, vol.~15, pp. 594--603, 2020.

\bibitem{deniz3}
B.~{Rassouli}, F.~E. {Rosas}, and D.~{G\"{u}nd\"{u}z}, ``Data disclosure under
  perfect sample privacy,'' \emph{IEEE Transactions on Information Forensics
  and Security}, pp. 1--1, 2019.

\bibitem{issa2}
I.~Issa, S.~Kamath, and A.~B. Wagner, ``Maximal leakage minimization for the
  shannon cipher system,'' in \emph{2016 IEEE International Symposium on
  Information Theory}, 2016, pp. 520--524.

\bibitem{shahab}
\BIBentryALTinterwordspacing
S.~Asoodeh, M.~Diaz, F.~Alajaji, and T.~Linder, ``Information extraction under
  privacy constraints,'' \emph{Information}, vol.~7, no.~1, 2016. [Online].
  Available: \url{https://www.mdpi.com/2078-2489/7/1/15}
\BIBentrySTDinterwordspacing

\bibitem{kosnane}
C.~T. Li and A.~El~Gamal, ``Strong functional representation lemma and
  applications to coding theorems,'' \emph{IEEE Transactions on Information
  Theory}, vol.~64, no.~11, pp. 6967--6978, 2018.

\bibitem{warner1965randomized}
S.~L. Warner, ``Randomized response: A survey technique for eliminating evasive
  answer bias,'' \emph{Journal of the American Statistical Association},
  vol.~60, no. 309, pp. 63--69, 1965.

\end{thebibliography}


\begin{thebibliography}{10}
\providecommand{\url}[1]{#1}
\csname url@samestyle\endcsname
\providecommand{\newblock}{\relax}
\providecommand{\bibinfo}[2]{#2}
\providecommand{\BIBentrySTDinterwordspacing}{\spaceskip=0pt\relax}
\providecommand{\BIBentryALTinterwordstretchfactor}{4}
\providecommand{\BIBentryALTinterwordspacing}{\spaceskip=\fontdimen2\font plus
\BIBentryALTinterwordstretchfactor\fontdimen3\font minus
  \fontdimen4\font\relax}
\providecommand{\BIBforeignlanguage}[2]{{%
\expandafter\ifx\csname l@#1\endcsname\relax
\typeout{** WARNING: IEEEtran.bst: No hyphenation pattern has been}%
\typeout{** loaded for the language `#1'. Using the pattern for}%
\typeout{** the default language instead.}%
\else
\language=\csname l@#1\endcsname
\fi
#2}}
\providecommand{\BIBdecl}{\relax}
\BIBdecl

\bibitem{maddah1}
M.~A. Maddah-Ali and U.~Niesen, ``Fundamental limits of caching,'' \emph{IEEE
  Transactions on Information Theory}, vol.~60, no.~5, pp. 2856--2867, 2014.

\bibitem{kostala}
Y.~Y. Shkel, R.~S. Blum, and H.~V. Poor, ``Secrecy by design with applications
  to privacy and compression,'' \emph{IEEE Transactions on Information Theory},
  vol.~67, no.~2, pp. 824--843, 2021.

\bibitem{lim}
S.~H. Lim, C.-Y. Wang, and M.~Gastpar, ``Information-theoretic caching: The
  multi-user case,'' \emph{IEEE Transactions on Information Theory}, vol.~63,
  no.~11, pp. 7018--7037, 2017.

\bibitem{wang33}
C.-Y. Wang, S.~H. Lim, and M.~Gastpar, ``A new converse bound for coded
  caching,'' in \emph{Information Theory and Applications Workshop, 2016},
  2016, pp. 1--6.

\bibitem{lu}
Q.~Yu, M.~A. Maddah-Ali, and A.~S. Avestimehr, ``The exact rate-memory tradeoff
  for caching with uncoded prefetching,'' \emph{IEEE Transactions on
  Information Theory}, 2017.

\bibitem{denizjadid}
Q.~Yang and D.~G{\"u}nd{\"u}z, ``Coded caching and content delivery with
  heterogeneous distortion requirements,'' \emph{IEEE Transactions on
  Information Theory}, vol.~64, no.~6, pp. 4347--4364, 2018.

\bibitem{shannon}
C.~E. Shannon, ``Communication theory of secrecy systems,'' \emph{The Bell
  System Technical Journal}, vol.~28, no.~4, pp. 656--715, 1949.

\bibitem{dworkal}
C.~Dwork, A.~Roth \emph{et~al.}, ``The algorithmic foundations of differential
  privacy,'' \emph{Foundations and Trends{\textregistered} in Theoretical
  Computer Science}, vol.~9, no. 3--4, pp. 211--407, 2014.

\bibitem{dwork1}
C.~Dwork, F.~McSherry, K.~Nissim, and A.~Smith, ``Calibrating noise to
  sensitivity in private data analysis,'' in \emph{Theory of cryptography
  conference}.\hskip 1em plus 0.5em minus 0.4em\relax Springer, 2006, pp.
  265--284.

\bibitem{denizjadid2}
D.~G{\"u}nd{\"u}z, E.~Erkip, and H.~V. Poor, ``Lossless compression with
  security constraints,'' in \emph{2008 IEEE International Symposium on
  Information Theory}, 2008, pp. 111--115.

\bibitem{gunduz2010source}
------, ``Source coding under secrecy constraints,'' \emph{Securing Wireless
  Communications at the Physical Layer}, pp. 173--199, 2010.

\bibitem{schaefer}
R.~F. Schaefer, H.~Boche, A.~Khisti, and H.~V. Poor, \emph{Information
  theoretic security and privacy of information systems}.\hskip 1em plus 0.5em
  minus 0.4em\relax Cambridge University Press, 2017.

\bibitem{sankar}
L.~Sankar, S.~R. Rajagopalan, and H.~V. Poor, ``Utility-privacy tradeoffs in
  databases: An information-theoretic approach,'' \emph{IEEE Transactions on
  Information Forensics and Security}, vol.~8, no.~6, pp. 838--852, 2013.

\bibitem{yamamoto1988rate}
H.~Yamamoto, ``A rate-distortion problem for a communication system with a
  secondary decoder to be hindered,'' \emph{IEEE Transactions on Information
  Theory}, vol.~34, no.~4, pp. 835--842, 1988.

\bibitem{yamamoto}
------, ``A source coding problem for sources with additional outputs to keep
  secret from the receiver or wiretappers (corresp.),'' \emph{IEEE Transactions
  on Information Theory}, vol.~29, no.~6, pp. 918--923, 1983.

\bibitem{cufff}
C.~Schieler and P.~Cuff, ``Rate-distortion theory for secrecy systems,''
  \emph{IEEE Transactions on Information Theory}, vol.~60, no.~12, pp.
  7584--7605, 2014.

\bibitem{borz}
B.~{Rassouli} and D.~{G\"{u}nd\"{u}z}, ``On perfect privacy,'' \emph{IEEE
  Journal on Selected Areas in Information Theory}, vol.~2, no.~1, pp.
  177--191, 2021.

\bibitem{khodam}
A.~Zamani, T.~J. Oechtering, and M.~Skoglund, ``A design framework for strongly
  $\chi^2$-private data disclosure,'' \emph{IEEE Transactions on Information
  Forensics and Security}, vol.~16, pp. 2312--2325, 2021.

\bibitem{Khodam22}
{A. Zamani, T. J. Oechtering, and M. Skoglund}, ``Data disclosure with non-zero
  leakage and non-invertible leakage matrix,'' \emph{IEEE Transactions on
  Information Forensics and Security}, vol.~17, pp. 165--179, 2022.

\bibitem{kostala2}
Y.~Y. Shkel and H.~V. Poor, ``A compression perspective on secrecy measures,''
  \emph{IEEE Journal on Selected Areas in Information Theory}, vol.~2, no.~1,
  pp. 163--176, 2021.

\bibitem{calmon4}
F.~P. {Calmon}, A.~{Makhdoumi}, M.~{Medard}, M.~{Varia}, M.~{Christiansen}, and
  K.~R. {Duffy}, ``Principal inertia components and applications,'' \emph{IEEE
  Transactions on Information Theory}, vol.~63, no.~8, pp. 5011--5038, Aug
  2017.

\bibitem{asoo}
S.~Asoodeh, M.~Diaz, F.~Alajaji, and T.~Linder, ``Estimation efficiency under
  privacy constraints,'' \emph{IEEE Transactions on Information Theory},
  vol.~65, no.~3, pp. 1512--1534, 2019.

\bibitem{issa2}
I.~Issa, S.~Kamath, and A.~B. Wagner, ``Maximal leakage minimization for the
  shannon cipher system,'' in \emph{2016 IEEE International Symposium on
  Information Theory}, 2016, pp. 520--524.

\bibitem{king1}
A.~Zamani, T.~J. Oechtering, and M.~Skoglund, ``Bounds for privacy-utility
  trade-off with non-zero leakage,'' in \emph{2022 IEEE International Symposium
  on Information Theory (ISIT)}, 2022, pp. 620--625.

\bibitem{king2}
------, ``Bounds for privacy-utility trade-off with per-letter privacy
  constraints and non-zero leakage,'' in \emph{2022 IEEE Information Theory
  Workshop (ITW)}, 2022, pp. 13--18.

\bibitem{asoodeh1}
S.~Asoodeh, M.~Diaz, F.~Alajaji, and T.~Linder, ``Information extraction under
  privacy constraints,'' \emph{Information}, vol.~7, no.~1, p.~15, 2016.

\bibitem{king3}
A.~Zamani, T.~J. Oechtering, and M.~Skoglund, ``On the privacy-utility
  trade-off with and without direct access to the private data,'' \emph{arXiv
  preprint arXiv:2212.12475}, 2022.

\bibitem{deniz3}
B.~{Rassouli}, F.~E. {Rosas}, and D.~{G\"{u}nd\"{u}z}, ``Data disclosure under
  perfect sample privacy,'' \emph{IEEE Transactions on Information Forensics
  and Security}, pp. 1--1, 2019.

\end{thebibliography}

\end{document}